\newtheorem{theorem}{Theorem}
\newtheorem{lemma}{Lemma}
\newtheorem{defn}{Definition}[section]
\begin{document}
\title{Joint Detection and Localization of an Unknown Number of Sources Using Algebraic Structure of the Noise Subspace }

\author{Matthew~W.~Morency~\IEEEmembership{Student Member,~IEEE}, Sergiy~A.~Vorobyov~\IEEEmembership{Fellow,~IEEE}, and Geert Leus~\IEEEmembership{Fellow,~IEEE}
\thanks{M.~W.~Morency and G.~Leus are with the Dept. Microelectronics, School of Electrical Engineering, Mathematics, and Computer Science, Delft University of Technology, Mekelweg 4, 2628 CD Delft, The Netherlands; emails: {\tt M.W.Morency@tudelft.nl, G.Leus@tudelft.nl}. S.~A.~Vorobyov is with the Dept. Signal Processing and Acoustics, Aalto University, PO Box 13000 FI-00076 Aalto, Finland; email: {\tt svor@ieee.org}. Some preliminary results of this work have been presented at the {\it Asilomar 2016}. This work was supported in parts by the KAUST-MIT-TUD consortium grant~{OSR-2015-Sensors-2700} and Academy of Finland reserach grant No.~299243. Matthew W. Morency is supported in part by the Natural Sciences and Engineering Research Council of Canada PGS-D award.}
}

\maketitle

\begin{abstract}
Source localization and spectral estimation are among the most fundamental problems in statistical and array signal processing. Methods which rely on the orthogonality of the signal and noise subspaces, such as Pisarenko's method, MUSIC, and root-MUSIC are some of the most widely used algorithms to solve these problems. As a common feature, these methods require both a-priori knowledge of the number of sources, and an estimate of the noise subspace. Both requirements are complicating factors to the practical implementation of the algorithms, and when not satisfied exactly, can potentially lead to severe errors. In this paper, we propose a new localization criterion based on the algebraic structure of the noise subspace that is described for the first time to the best of our knowledge. Using this criterion and the relationship between the source localization problem and the problem of computing the greatest common divisor (GCD), or more practically approximate GCD, for polynomials, we propose two algorithms which adaptively learn the number of sources and estimate their locations. Simulation results show a significant improvement over root-MUSIC in challenging scenarios such as closely located sources, both in terms of detection of the number of sources and their localization over a broad and practical range of SNRs. Further, no performance sacrifice in simple scenarios is observed. 
\end{abstract}

{\bf {\it Index Terms --} Algebraic geometry, approximate greatest common devisor, direction-of-arrival estimation, noise subspace, polynomial ideals, source localization, spectral estimation.}

\section{Introduction}
The problems of source localization and spectral estimation in a noisy environment are among the most fundamental problems in array processing \cite{VanTrees} and spectral analysis \cite{StoicaMoses}. Among the algorithms devised to solve these problems, subspace-based algorithms such as Pisarenko's method, MUSIC, and root-MUSIC have become ubiquitous \cite{BhaskarRao}-\cite{MahdiMe}. Subspace-based methods require two steps. First, the signal and noise subspaces must be estimated. Second, given the estimates of the signal and noise subspaces, the source locations or frequency estimates are derived with respect to some criterion, e.g. minimization of a cost function - spectral function or maximum likelihood function \cite{VanTrees}, or spectral peak-finding. The tasks of detecting the number of sources, and estimating their locations is a fundamental one in  telecommunications and radar as well as many other engineering, statistics, and scientific applications. In telecommunications, for example, it is often required to identify sources of interference and their locations in order to maintain the functioning of a system at a desired level of performance. In radar, the task of detection of the number of sources is of utmost importance. In military applications, for example, strategies to evade radar systems attempt to induce either overestimation (deployment of chaff), or underestimation (stealth) of the number of targets. Most subspace-based methods differ only in how to approach the second step. For example, MUSIC and root-MUSIC differ only in the criterion used to derive the target locations. However, it has been argued that the first step - estimating the signal and noise subspace - is far more crucial \cite{EsaVisa}. As such, substantial research efforts have been invested into providing robust estimates of signal and noise subspaces in a variety of challenging scenarios \cite{EsaVisa}-\cite{MahdiMe}. This is typically done through the eigen-decomposition of the autocorrelation matrix of a set of observations. Subspace estimation is thus reduced to a selection problem such as the root selection problem in the case of, for example, root-MUSIC \cite{EsaVisa}-\cite{MahdiMe}. However, as vector spaces, the signal and noise subspaces have a dimension which is either assumed to be known a-priori, or, perhaps more practically, must first be estimated.

While the estimate of the signal and noise subspaces is taken from the eigenvectors of the observation autocorrelation matrix, the dimension of the signal subspace is typically inferred from the distribution of the eigenvalues\cite{MahdiMe}-\cite{Zoubir}. Typically, the dimension of the signal subspace is taken to be the number of ``dominant'' eigenvalues. However, other statistical criteria such as the Akaike information criterion (AIC) and minimum description length (MDL) can be used as well drawing a parallel to model order selection \cite{KailathWax}. In recent work \cite{Zoubir}, the estimation of the number of sources has been considered as a multiple hypothesis test on the equality of eigenvalues. In order to perform the hypothesis testing, multiple instances of the observation autocorrelation matrix must be generated, and multiple eigen-decompositions performed. The assumption which underlies all of these methods, though, is that information about the number of sources is contained in the eigenvalues, while the eigenvectors themselves are ignored.

In this paper, we argue that under certain assumptions, the algebraic structure of the eigenvectors themselves contains a great deal of information about the number of sources, as well as their locations. Specifically, the noise eigenvectors are argued to lie in a univariate polynomial ideal generated by a single element in the univariate polynomial ring. The degree of the generator thus corresponds to the number of targets and the roots of the generator correspond to their locations.

To exploit the above described algebraic structure of the noise subspace we propose two algorithms, the first of which does not need to estimate the noise subspace (however, it can use any estimate of the noise subspace), and the second of which uses the structure of the noise eigenvectors to provide the subspace estimate. The first algorithm uses hierarchical clustering to first locate clusters of roots which are tightly located. An estimate of the target location is then based on the phase information of all the roots in the cluster. The second method uses the method of Lagrange multipliers proposed in \cite{KarmarLakshman} to produce an estimate of the greatest common divisor of the noise eigenvectors (which are viewed as polynomials) in an optimal way. This method proceeds in two steps: an initial estimate of the source locations (and number of sources), followed by a root ``refining'' Gauss-Newton iteration. The root-refinement produces a certificate of a necessary condition of optimality of the estimate of the approximate greatest common devisor (GCD), and can be used with any initial estimate, including that of the proposed root-clustering algorithm.

This paper contains several contributions.\footnote{Some preliminary results have been reported in the conference publication \cite{Asilomar17}.}
\begin{itemize}
	\item The first is a thorough analysis of the noise subspace of the observed signal covariance matrix, and a description of its algebraic structure. 
	
	\item Based on this algebraic structure, we formulate a new noise subspace selection criterion that allows for accurate noise subspace estimation by simply computing the GCD between two randomly selected eigenvectors of the signal covariance matrix.
	
	\item We show that in practice for small sample size and other imperfections, the source localization problem is equivalent to the approximate GCD problem, and thus amenable to solution via techniques for finding approximate GCDs. 
	
	\item We prove that the formation of the root-MUSIC polynomial is optimal with respect to a measure of the perturbation of the observed noise eigenvectors. To our knowledge, this result has not been reported previously in the signal processing literature. 
	
	\item We provide a certificate to verify the optimality of any subspace based localization algorithm. 
	
	\item We develop practical algorithms for joint detection and localization of an unknown number of sources using the algebraic structure of the noise subspace, which are based on the approximate GCD calculation. 
	
	\item Via simulations, we show that the proposed algorithms sacrifice no performance in simple scenarios, while providing a tangible benefit in challenging scenarios such as closely located sources in moderate signal-to-noise ratio (SNR) conditions.
\end{itemize}

{\it Notation:} Throughout this paper bold upper-case letters denote matrices, bold lower-case letters stand for vectors, upper-case letters are constants, and lower-case letters are variables. The $N \times N$ identity matrix is denoted as $\mathbf I_N$, while $\boldsymbol{0}$ stands for the vector of zeros and $\boldsymbol{0}_{N \times M}$ is the matrix of zeros of size $N \times M$. The complex Gaussian distribution of a random vector with zero mean and covariance matrix $\mathbf C$ is denoted as $\mathcal{CN} (\boldsymbol{0}, \mathbf C)$. Real and imaginary parts of a complex number are denoted as ${\rm Re} (\cdot )$ and ${\rm Im} (\cdot )$, respectively, and $j \triangleq \sqrt{-1}$. The notation $\rm{range}(\cdot)$ is used for the operator that returns the column space of its matrix argument, while $(\cdot)^H$, $(\cdot)^T$, $(\cdot)^*$, $(\cdot)^{\dagger}$, $\| \cdot \|$, and ${\rm deg} (\cdot)$ denote respectively the Hermitian transpose, transpose, conjugate of a complex number, Moore-Penrose pseudo-inverse, Euclidian norm of a vector, and degree of a polynomial. In addition, $\mathbb{K}$ stands for a base field and $\mathbb{K}[x_1,\cdots,x_n]$ denotes the polynomial ring with coefficients in $\mathbb{K}$ in $n$-variates. A {\it commutative ring} is defined as a set that is closed under two different operations, namely addition and subtraction. In addition to being commutative, commutative rings are associative, left and right distributive, and contain identity elements for both the addition and multiplication operation (denoted as 0 and 1).

The paper is organized as follows. The data model and basics of subspace-based localization algorithms are given in Section \ref{sec:DataProblem} together with the problem description. The algebraic structure of the noise subspace is derived in Section~\ref{sec:SelecCriterion}. In this section, the new noise subspace selection criterion is also formulated and the main theoretical results for the paper relating the localization problem of an unknown number of sources to the approximate GCD calculation are also given. We propose algorithms which leverage the algebraic structure of the noise subspace and address the corresponding approximate GCD problem in Section~\ref{sec:Algorithms}. Simulation results follow in Section~\ref{Simul}. The paper is concluded in Section~\ref{Concl}, and some technical derivations are given in Appendix.

\section{Data Model, Subspace-Based Methods and the Problem}
\label{sec:DataProblem}
\subsection{Data Model}
\label{subsec:data}
Consider $L$ independent narrow-band Gaussian sources in the far-field impinging upon a uniform linear array (ULA)\footnote{We will show in Section~\ref{Non-uniform} how the ULA assumption can be relaxed for our developments in the paper, but for convenience we start here with ULA.} of $N$ antenna elements with inter-element spacing $\lambda_c/2$, where $\lambda_c$ is the carrier wavelength. The signal observed at the antenna array at time $t$ can be written as
\begin{align} 
\mathbf x(t) &= \mathbf A \mathbf s(t) + \mathbf n(t)  \label{eq:obsignal}
\end{align}
where $\mathbf A \triangleq [\mathbf a(\theta_1),\cdots,\mathbf a(\theta_L)]$, $\mathbf [\mathbf a(\theta)]_n \triangleq \alpha^{n-1}$, $\alpha \triangleq e^{j \pi sin(\theta_l)}$, $\theta_l$ is the DOA of the $l$-th target, $\mathbf s(t)$ is an $L \times 1$ vector of Gaussian i.i.d. equal power source signals, which can be assumed deterministic or stochastic Gaussian distributed noise vector with zero mean and covariance $\mathbf S$, i.e., $\mathcal{N}(\boldsymbol{0}, \sigma_s^2 \mathbf I_L)$, $\sigma_s^2$ is the source power, $\mathbf n(t) \sim \mathcal{N} (\boldsymbol{0}, \sigma_n^2 \mathbf I_N)$, is the $N \times 1$ Gaussian distributed with zero mean and covariance $\sigma_n^2 \mathbf I_N$ sensor noise vector, $\mathbf I_N$ denotes the identity matrix of size $N \times N$, $\sigma_n^2$ is the noise power, and $\boldsymbol{0}$ stands for the vector of zeros. Collecting $T$ observations $\mathbf x(t)$, the sample covariance matrix (SCM) can be computed as 
\begin{align} 
\hat{\mathbf R}_{xx} &\triangleq \frac{1}{T}\sum_{t=1}^T \mathbf x(t) \mathbf x^H(t) \nonumber \\
							 &\approx \mathbf A \mathbf A^H \sigma_s^2 \mathbf I_L+ \sigma_n^2 \mathbf I_N \label{eq:cov_matrix}
\end{align}
where the approximate equality holds for a sufficiently large sample size $T$ and follows after substituting \eqref{eq:obsignal} to the first row of \eqref{eq:cov_matrix}. 

For signal model \eqref{eq:obsignal}, $\hat{\mathbf R}_{xx}$ has full rank almost surely if $T \geq N$. Since the true covariance matrix $\mathbf R_{xx}$ is Hermitian by definition, it has a full set of real eigenvalues, and an eigenbasis, allowing us to write the eigenvalue decomposition for the estimate $\hat{\mathbf R}_{xx}$ as
\begin{align}
\hat{\mathbf R}_{xx} &= \mathbf Q \boldsymbol \Lambda \mathbf Q^H \nonumber \\
&= \mathbf Q_s \boldsymbol{\Lambda}_s \mathbf Q_s^H + \mathbf Q_n \boldsymbol{\Lambda}_n \mathbf Q_n^H \label{eq:eig_decomp}
\end{align}
where $\mathbf Q_s$, $\boldsymbol{\Lambda}_s$, and $\mathbf Q_n$, $\boldsymbol{\Lambda}_n$ are the matrices of eigenvectors and eigenvalues for the signal and noise subspaces, respectively, with $\boldsymbol{\Lambda}_s$ and $\boldsymbol{\Lambda}_n$ being diagonal matrices. 

A major difficulty with subspace based source localization methods introduced next  is that we only have access to the matrix $\mathbf Q$ and must choose which columns belong to $\mathbf Q_s$ and which belong to $\mathbf Q_n$. In other words, the dimension of the signal subspace is not known in general.

\subsection{Subspace Based Methods}
Using the property of the decomposition \eqref{eq:eig_decomp}, we have $\mathbf Q_s \perp \mathbf Q_n$, and as $T \to \infty$, $L$ can be exactly estimated and $\rm{ran}(\mathbf Q_s) = \rm{ran}(\mathbf A)$, which implies that $\mathbf Q_n^H \mathbf a(\theta_l) = \boldsymbol{0}$, $\forall l$, where $\theta_l$ is the $l$-th source direction of arrival (DOA). It is this property which is exploited by subspace based methods such as Pisarenko's method, MUSIC, and root-MUSIC. 

As was mentioned in the introduction, MUSIC and root-MUSIC differ only in how the DOAs are retrieved from the subspace estimates. Given an estimate of the number of sources $\hat{L}$ and a corresponding estimate of the noise subspace $\mathbf Q_n$, both algorithms derive their estimates from the function $J(\theta) = \mathbf a^H(\theta) \mathbf Q_n \mathbf Q^H_n \mathbf a(\theta)$. The MUSIC algorithm takes the $\hat{L}$ largest peaks of the spectral function $J^{-1}(\theta)$ as $\theta$ is varied across the entire angular sector.
The root-MUSIC algorithm for ULAs first treats each column of $\mathbf Q_n$ as the vector of coefficients of a polynomial in the field defined by this vector, i.e., $\mathbb{C}[x]$, and factors each polynomial using a root-finding algorithm. Then, the $\hat{L}$ roots which are closest to the unit circle are taken to be the roots which correspond to the targets. Specifically, the $\hat{L}$ roots are found as the points $z \in \mathbb{C}$ minimizing 
\begin{align}
 \mathbf d^H(z) \mathbf Q_n \mathbf Q^H_n \mathbf d(z) \label{eq:rMUSIC}
\end{align}
where $ [\mathbf d(z)]_n = z^n,\ z \in \mathbb{C}$. The difference between the definitions of $\mathbf a(\theta)$ and $\mathbf d(z)$ are the domains. The domain of $\mathbf d(z)$ is the field of complex numbers, whereas $\mathbf a(\theta)$ accepts arguments only from the range $[0,2\pi)$ corresponding to the unit circle. This is the reasoning behind the root selection criterion of root-MUSIC.

\subsection{Problem Description}
\label{ProblemForm}
Clearly, knowledge of the number of sources is required in the algorithms discussed above. This knowledge is typically based on the distribution of the eigenvalues of $\hat{\mathbf R}_{xx}$. Information criteria such as the Akaike information criterion (AIC) or minimum description length (MDL) can, for instance, be used to identify the number of sources in drawing a parallel to model order selection \cite{KailathWax}. MDL was shown to be a consistent estimator, and returns the correct number of targets in the infinite sample regime, while AIC is not even consistent \cite{Rissanen}. Indeed, AIC is well known to overestimate the model order. It was demonstrated in \cite{XuPierreKaveh}, however, that the methods based on the distribution of the eigenvalues of $\hat{\mathbf R}_{xx}$, such as AIC, and MDL, quickly break down in non-ideal scenarios, such as model mismatch or closely located sources. In non-ideal scenarios, all aforementioned methods can lead to model order under- and overestimation even for infinite sample size. We demonstrate this through the following example. 
 
{\it Illustrative Example 1:} Fig.~\ref{fig:eigenvalues} shows the eigenvalue distribution of a single instance of $\hat{\mathbf R}_{xx}$ corresponding to a ULA of $N = 10$ elements, at $10$~dB signal-to-noise ratio (SNR), with $T=100$ snapshots, and two closely located sources impinging from $31^o$ and $32^o$. The difference between the second eigenvalue and the last eigenvalue is $0.1396$, while the first eigenvalue is over $100$ times larger than the second eigenvalue. Thus, in such non-ideal scenario of closely located sources, it would be difficult to conclude that there is more than one source on the basis of the eigenvalues alone. However, AIC estimates the presence of between $2$ and $5$ sources, while MDL estimates the presence of between $1$ and $4$ sources. We propose a method to estimate the dimension of the signal subspace which, in practice, never over-estimates the number of sources even in such a challenging scenario. Under-estimation is sometimes unavoidable due to source merging.

\begin{table*}[t]
	\centering
	\caption{Eigenvalues of the SCM corresponding to illustrative example $1$.}
	\begin{tabular}{||c c c c c c c c c c||} 
		\hline
		$\lambda_1$ & $\lambda_2$ & $\lambda_3$ & $\lambda_4$ & $\lambda_5$ & $\lambda_6$ & $\lambda_7$ & $\lambda_8$ & $\lambda_9$ & $\lambda_{10}$ \\
		\hline\hline
		20.3978 & 0.2015 & 0.1499 & 0.1289 & 0.1179 & 0.1148 & 0.0919 & 0.0859 & 0.0675 & 0.0619 \\
		\hline
	\end{tabular}
\end{table*}
 
 \begin{figure}[t]
 	\begin{center}
 		\includegraphics[width=\linewidth,trim=0 250 0 200]{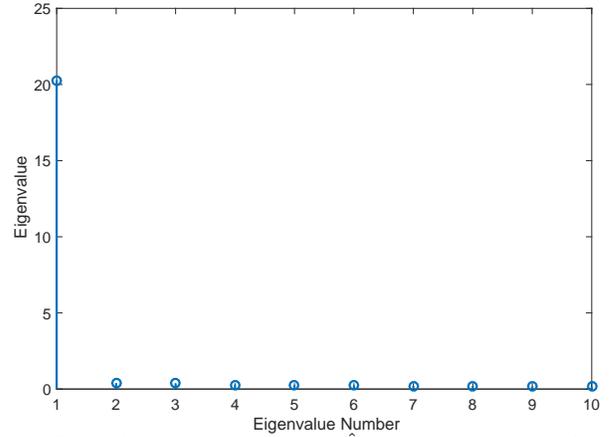}
 	\end{center}
 	\vspace{2ex}
 	\caption{Eigenvalues for one instance of $\hat{\mathbf R}_{xx}$. ULA of $N = 10$ elements, $T = 100$ snapshots, $10$~dB SNR, two sources located at $31^o$ and $32^o$.} \label{fig:eigenvalues}
 \end{figure}
 
The phenomenon depicted in Fig.~\ref{fig:eigenvalues} is referred to as the ``breakdown'' of the detector \cite{XuPierreKaveh}. Thus, it can be concluded that the distribution of the eigenvalues alone may be unreliable for the model order estimation especially when the angular separation is small and/or the sample size is small \cite{MahdiMe}, but even in the case of infinite sample size under some non-idealities in the model. As a way around of the aforementioned ``breakdown'' phenomenon, the eigenvectors themselves will be considered in this paper. Interestingly, the authors of \cite{XuPierreKaveh} presented an alternative to using only eigenvalue information, which also considers the eigenvectors themselves, and develop a statistic based on the average projection (taken over all angles) of the array steering vector on the null-space defined by the eigenvectors currently being tested. Moreover, a similar test was presented in \cite{VanTrees}, where the source DOAs are assumed to be known approximately. Signal eigenvectors are then chosen to be the eigenvectors onto which the presumed steering vector has drastically larger projections. In contrast with these methods, our methods in this paper assume no knowledge and make no use of the steering vector and its projections. Instead, our methods make use of the algebraic structure present in the noise eigenvectors, which we find out and explain for the first time, under the data model in Subsection~\ref{subsec:data}.

More generally, it is known that even if the root-MUSIC algorithm is provided with the correct number of sources, in challenging scenarios such as the one depicted in Fig.~\ref{fig:eigenvalues} or, if the sample size is small, the performance of the algorithm is generally quite poor. This is because of a deficiency in the classic root-selection criterion, which as we argue in the next section should be replaced by a new criterion based on the algebraic structure of the noise subspace. It will be then shown that methods developed based on the new criterion overcome the aforementioned problem and lead to better performance in challenging scenarios.

\section{Algebraic Structure of the Noise Subspace} \label{sec:SelecCriterion}
Let us first assume that the number of sources $L$ is known. Then for known $L$ we aim at explaining the algebraic structure of the noise subspace for the data model \eqref{eq:obsignal}. Pursuant to the discussions in the previous section, the assumption of known $L$ will later be dropped, as we introduce methods to estimate the number of sources. 

As was described in the previous section, subspace based methods leverage the orthogonality of the noise subspace estimate, i.e., $\mathbf Q_n$, and the steering vector corresponding to the DOAs of the sources. Specifically, when $T \to \infty$, the condition that
\begin{align}
\mathbf A^H \mathbf Q_n &= \mathbf 0_{L,N-L} \label{eq:MUSIC_cond}
\end{align}
must hold. In other words, the column space of $\mathbf Q_n$ is entirely contained in the null-space of $\mathbf A^H$. 

To explain the algebraic structure of the noise subspace $\mathbf Q_n$, we will need \eqref{eq:MUSIC_cond} and will also make use of some basic concepts from algebraic geometry. Since algebraic geometry is not yet a common tool in signal processing we introduce the concepts that we need to aid in the understanding of the following contents. Algebraic geometry is concerned with the relations between sets of polynomials called ideals (algebraic objects) and their associated zero loci called varieties (geometric objects) \cite{IVA}.

\begin{defn}
	An ideal $\mathcal{I}$ in $\mathbb{K}[x_1,\cdots,x_n]$ is a subgroup of $\mathbb{K}[x_1,\cdots,x_n]$ with the property that $\forall a \in \mathcal I,\ r \in \mathbb{K}[x_1,\cdots,x_n], a \cdot r \in \mathcal{I}$.
\end{defn}

As an example, take the commutative ring of univariate polynomials over $\mathbb C$, written $\mathbb{C}[x]$, and as a subset, take the set of all polynomials with a common root at $\alpha \in \mathbb{C}$\cite{Vinberg}.

\begin{defn}
	An algebraic variety is a subset of $\mathbb{K}^N$ such that $V(\mathcal I) \triangleq \{\mathbf{p} \in \mathbb{K}^N | f(\mathbf{p}) = 0,\ \forall f \in \mathcal{I} \subset \mathbb{K}[x_1,\cdots,x_n]\}$.	
\end{defn}

Thus, an algebraic variety is described by the polynomials vanishing on it. One can similarly describe an ideal by the set on which every member vanishes\cite{IVA}.

\begin{defn}
	A polynomial ideal given a variety $V \in \mathbb{K}^N$ is a set of polynomials with the property that $\mathcal{I}(V) = \{f \in \mathbb K [x_1,\cdots,x_N] | f(\mathbf p) = 0,\ \forall \mathbf{p}\in V\}$. \label{def:definition3}
\end{defn}

For example, take the variety consisting of two points on the real line $V = \{1,2\}$. The ideal corresponding to this variety is the set of polynomials with at least one root at each of $x = 1$ and $x = 2$.

Ideals are generated by elements contained within them, much the same way that a vector space is spanned by linearly independent vectors contained in the space. An ideal which is generated by a single element is a principal ideal\cite{IVA}.
\begin{defn}
	The principal polynomial ideal generated by $f$, denoted as $\langle f \rangle$, is the set \\ $\{h\ | \ h = f \cdot g,\ g \in \mathbb{K}[x_1,\cdots,x_N] \}$. \label{def:principal}
\end{defn}
All univariate polynomial ideals are principal and can thus be generated by a single element of $\mathbb{K}[x]$ which is the greatest commmon divisor (GCD) of all the polynomials in the ideal $\mathcal{I}$. 

Now we observe that $\mathbf d^H(z)\mathbf q_i$ is simply the evaluation of a polynomial with coefficients defined by the entries of the column $\mathbf q_i$ of $\mathbf Q$ at $z$. Since each column of the noise subspace $\mathbf Q_n$ must lie in the null-space of $\mathbf A^H$ accordign to \eqref{eq:MUSIC_cond}, we can use Definition~\ref{def:definition3} to assert that the columns of $\mathbf Q_n$ must lie in a univariate polynomial ideal $\mathcal{I}(V)$ where $V = \{\alpha_1,\cdots,\alpha_n\}$ are the generators of the columns of $\mathbf A$. Pursuant to Definition~\ref{def:principal}, this implies that the columns of $\mathbf Q_n$ be generated by a single element $Q(x) \in \mathbb{C}[x]$. This generator being, specifically,

\begin{align}
Q(x) = \prod_{l=1}^L (x - \alpha_l). \label{eq:generator}
\end{align}
where $\alpha_l$ are complex numbers whose phase arguments correspond to the target locations.

This extends trivially to the root-MUSIC polynomial $\mathbf d^H(z) \mathbf Q_n \mathbf Q^H_n \mathbf d(z)$ as it is merely the $2$-norm of the inner product $\mathbf d^H(z) \mathbf Q_n$. Specifically, the polynomial ideal which describes the noise subspace is a function of the target locations, parametrized by $\alpha_l$, $\forall l$. This yields the following alternative noise subspace selection criterion.

Based on the algebraic structure of the noise subspace, the new noise subspace selection criterion can be formulated as follows.

{\bf Noise Subspace Selection Criterion:} {\it Given $N$ polynomials whose coefficients are described by the $N$ eigenvectors of $\hat{\mathbf R}_{xx}$, select the largest subset which lies in a polynomial ideal. The cardinality of this set is $N - \hat{L}$, where $\hat{L}$ is the degree of the generator of the polynomial ideal. Its factors are injectively related to the source locations.

This criterion describes the essential difference between our approach, and the dominant approach of source estimation based on eigenvalues, and estimation based on eigenvectors. We aim to estimate a single object, the maximal degree GCD, of a subset of the observed eigenvectors. The properties of this object are the solutions to the separate problems. Specifically, the maximal degree is the number of targets, and the factors themselves are injectively related to the target locations. Notably, this implies that for every degree of this GCD, there is a factor which provides the estimate of the target location. In illustrative example 1, however, it was shown that this is not necessarily true of the eigenvalue-detection eigenvector-estimation paradigm. Targets could be estimable based on the observed eigenvectors, but not detectable on the basis of the eigenvalues alone.}

It is worth mentioning here an intresting connection to the theory of sums of squares (SOS) polynomials. The matrix $\mathbf Q_n \mathbf Q^H_n$ is positive semidefinite (PSD), and thus, $\mathbf d^H(z) \mathbf Q_n \mathbf Q^H_n \mathbf d(z)$ is a globally non-negative function over $\mathbb{C}$. Hilbert proved that a univariate polynomial is globally non-negative if and only if it is a SOS polynomial \cite{Blekherman}. That is, $J(\theta) = 0$, or equaivalently the spectral function has a peak, if and only if the columns of $\mathbf Q_n$ lie in a univariate polynomial ideal, which gives the above noise subspace selection criterion. 

\subsection{Exact $\hat{\mathbf R}_{xx}$}
\label{sec:algorithms}
In the previous sections, the analysis assumed an infinite sample size, in which case the estimate $\hat{\mathbf R}_{xx}$ is almost exact, that is, $\hat{\mathbf R}_{xx} \approx \mathbf R_{xx} = \sigma_s^2\mathbf A \mathbf A^H + \sigma_n^2 \mathbf I$. In this case, the noise subspace lies exactly in a polynomial ideal. Moreover, consistent eigenvalue methods for estimating the number of targets allows the error free separation of the matrix $\mathbf Q$ into signal and noise component matrices.
 
As discussed in Subsection~\ref{ProblemForm}, the estimation of the noise subspace based on the conventional criteria may be inaccurate if there exist data model non-idealities. The new subspace selection criterion allows, in contrast, for accurate noise subspace estimation in terms of simply computing the greatest common divisor (GCD) between two randomly selected columns of $\mathbf Q$, where Euclid's algorithm can be used for computing GCD \cite{KarmarLakshman}. Indeed, if two noise eigenvectors are selected, then the GCD is a polynomial, and its factors, $\alpha_l$, are the complex generators corresponding to the source locations. Two things are worth noting about this selection algorithm. The first is that the maximum degree GCD of the noise eigenvectors cannot exceed $N-L$, where $L$ is the true dimension of the signal subspace. The second is that the signal eigenvectors which are linear combinations of polynomials themselves almost surely do not share roots. These two statements are proven in the following two lemmas.

\begin{lemma}
\label{lem:lemma1}
The degree of the largest degree GCD of $\mathbf q_{L+1},\cdots,\mathbf q_{N}$, where $\mathbf q_{L+1},\cdots,\mathbf q_{N}$ are the noise eigenvectors of $\mathbf R_{xx}$ is upper-bounded by $L$.	
\end{lemma}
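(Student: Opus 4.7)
The plan is to exploit two facts simultaneously: that each noise eigenvector, when viewed as a univariate polynomial of degree at most $N-1$, is forced to be divisible by the generator $Q(x) = \prod_{l=1}^{L}(x-\alpha_l)$ of degree $L$ by the orthogonality condition \eqref{eq:MUSIC_cond}, and that the noise subspace has linear dimension exactly $N-L$. The upper bound on the GCD degree then follows from a dimension-counting argument that relates the algebraic notion of ``being divisible by a polynomial of degree $d$'' to the linear-algebraic notion of ``lying in a subspace of $\mathbb{C}^N$ of codimension $d$.''

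First, I would identify each eigenvector $\mathbf q_i \in \mathbb{C}^N$ with the polynomial $q_i(x) \triangleq \sum_{n=1}^{N}[\mathbf q_i]_n\, x^{n-1} \in \mathbb{C}[x]$, so that $\mathbf d^H(\alpha)\mathbf q_i = q_i(\alpha)$. Applying \eqref{eq:MUSIC_cond} column-wise gives $q_i(\alpha_l)=0$ for every $l=1,\dots,L$ and every $i=L+1,\dots,N$, so $Q(x)\mid q_i(x)$ for each noise eigenvector. In particular, any common divisor of $q_{L+1},\dots,q_N$ divides each individual $q_i$ and so has degree at most $\deg q_i \le N-1$.

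Second, suppose for contradiction that $G(x) = \gcd(q_{L+1},\dots,q_N)$ has degree $d > L$. Let
\begin{equation*}
W_G \triangleq \{p \in \mathbb{C}[x] : \deg p \le N-1,\ G(x) \mid p(x)\}.
\end{equation*}
Since every element of $W_G$ has the form $G(x)\cdot r(x)$ with $\deg r \le N-1-d$, and the map $r \mapsto G\cdot r$ is a $\mathbb{C}$-linear injection from the space of polynomials of degree $\le N-1-d$, the linear dimension of $W_G$ equals $N-d$. By construction all noise eigenvector polynomials lie in $W_G$, hence $\mathrm{range}(\mathbf Q_n)$ embeds into a subspace of $\mathbb{C}^N$ of dimension $N-d$. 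But $\mathrm{range}(\mathbf Q_n)$ has dimension $N-L$ by the eigen-decomposition \eqref{eq:eig_decomp} (recall that $\mathbf R_{xx}$ is exact here and the noise eigenvalue has multiplicity $N-L$). This forces $N-L \le N-d$, i.e.\ $d \le L$, contradicting the assumption $d > L$.

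The only step that requires any care is the dimension bookkeeping in $W_G$, and specifically the observation that divisibility by a fixed polynomial of degree $d$ cuts out a subspace of codimension $d$ rather than something larger; the rest is a bridge between the algebraic-geometric language of Section \ref{sec:SelecCriterion} and the linear-algebraic dimension of the noise subspace. No additional non-degeneracy assumption on $\mathbf A$ is needed beyond the standard one that the columns $\mathbf a(\theta_1),\dots,\mathbf a(\theta_L)$ are linearly independent, which is what guarantees $\dim\mathrm{range}(\mathbf Q_n) = N-L$ in the first place.
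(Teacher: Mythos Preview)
Your argument is correct and is essentially the same dimension-counting contradiction as the paper's: both observe that if the noise eigenvectors shared a common divisor of degree $d$, they would all lie in an $(N-d)$-dimensional subspace of $\mathbb{C}^N$, which is impossible for $d>L$ since $\dim\mathrm{range}(\mathbf Q_n)=N-L$. The only cosmetic difference is that the paper expresses the subspace $W_G$ concretely as the column space of a Toeplitz convolution matrix $\mathbf B$ (writing $\mathbf Q_n=\mathbf B\mathbf R$), a representation it reuses later, whereas you phrase the same bound abstractly via the injection $r\mapsto G\cdot r$.
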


\begin{proof}
In section \ref{sec:SelecCriterion} it was shown that the noise eigenvectors of $\mathbf R_{xx}$ when interpreted as the coefficients of a univariate polynomial exist in a univariate polynomial ideal generated by $Q(x) = \prod_{l=1}^{L} (x - \alpha_l)$. Let us denote the restriction of this polynomial ideal to polynomials of degree less than or equal to $N - 1$ as $\langle Q(x) \rangle |_{N-1}$. Using Vi\`ete's formulas we can produce a vector of coefficients of $Q(x)$ denoted as $\mathbf q$, which allows us to write a basis for the noise subspace in Toeplitz form as

\begin{align}
\mathbf B \triangleq \begin{bmatrix}
\mathbf q & 0 & \cdots & 0 \\
\boldsymbol 0 & \mathbf q & \vdots & \boldsymbol{0} \\
\vdots & \boldsymbol{0} & \ddots & \mathbf q 
\end{bmatrix}
\end{align}  
The noise eigenvector matrix can then be written as $\mathbf Q_n = \mathbf B \mathbf R$ where $\mathbf R$ is some invertible matrix. Assume that the dimension of the signal subspace is $L$, and that the noise eigenvectors have a maximal degree GCD of degree $L+1$. The dimension of $\mathbf{B}$ must then necessarily be $N \times (N-L-1)$ since the length of $\mathbf q$ must be $L + 2$. However this is a contradiction of the assumption that $\mathbf R_{xx}$ was full rank and Hermitian, and thus has a complete orthogonal eigenbasis, which shows the claim.
\end{proof}

We note here that $\mathbf B$ is known as a convolution matrix having the property that an $m$ degree polynomial $f(x) = q(x)u(x)$ can be represented as a vector in $\mathbb{C}^{m+1}$ as $\mathbf f = \mathbf B \mathbf u$ where $\mathbf q$ and $\mathbf u$ are the vector representations of $q(x)$ and $u(x)$.

\begin{lemma}
The signal eigenvectors of $\mathbf R_{xx}$, $\mathbf q_{1},\cdots,\mathbf q_{L}$ almost surely have a constant GCD.
\end{lemma}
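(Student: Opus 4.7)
The plan is to reduce the lemma to showing that the steering-vector polynomials $\mathbf{a}(\theta_1),\ldots,\mathbf{a}(\theta_L)$ generically share no common root, and then to verify this via a direct factorization combined with a measure-zero condition on the DOA parameters.

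First, I would observe that when the DOAs are distinct, so that $\mathbf{A}$ has full column rank $L$, the signal eigenvectors $\mathbf{q}_1,\ldots,\mathbf{q}_L$ form an orthonormal basis of $\mathrm{range}(\mathbf{A})$. Any non-constant common factor $g(x)$ of $\mathbf{q}_1,\ldots,\mathbf{q}_L$ therefore divides every element of $\mathrm{range}(\mathbf{A})$, and in particular each steering-vector polynomial $p_l(x) \triangleq \sum_{n=0}^{N-1}\alpha_l^n x^n$. Conversely, any common divisor of $p_1,\ldots,p_L$ is preserved under linear combinations and hence divides every signal eigenvector. The claim thus reduces to showing that, for generic DOAs, $p_1,\ldots,p_L$ share no common root.

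Second, using the closed form
\begin{equation}
p_l(x) = \frac{(\alpha_l x)^N - 1}{\alpha_l x - 1},
\end{equation}
I would read off the root set of $p_l$ as $\{\omega_k/\alpha_l : k=1,\ldots,N-1\}$, where $\omega_k \triangleq e^{j2\pi k/N}$ are the nontrivial $N$-th roots of unity. A common root $\beta$ of all $p_l$ forces $\beta\alpha_l$ to be a nontrivial $N$-th root of unity for every $l$; since $|\alpha_l|=1$ this also forces $|\beta|=1$, and after eliminating $\arg(\beta)$ the condition reduces to
\begin{equation}
\pi\bigl(\sin\theta_l - \sin\theta_1\bigr) \equiv \tfrac{2\pi m_l}{N} \pmod{2\pi}, \quad l=2,\ldots,L,
\end{equation}
for some integers $m_l$. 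This defines a finite union of real-analytic submanifolds of codimension at least one in the $L$-torus of DOAs, hence a Lebesgue-null set, so for DOAs in generic position the polynomials $p_1,\ldots,p_L$ share no common root and the GCD of the signal eigenvectors is a nonzero constant.

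The main obstacle is making ``almost surely'' precise in a setting where the DOAs are conventionally treated as deterministic unknowns; the cleanest reading is one of genericity, namely that the exceptional configurations form a closed nowhere-dense subset of the parameter space of vanishing Lebesgue measure (equivalently, a proper algebraic subvariety after parametrizing by the $\alpha_l$ on the unit circle). A minor caveat is that the argument implicitly treats $L \geq 2$; the $L=1$ case is degenerate, since a single polynomial is its own GCD, and should be excluded from the statement.
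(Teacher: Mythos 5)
Your proof is correct, and after the opening reduction it takes a genuinely different and more explicit route than the paper. Both arguments begin identically: writing $\mathbf Q_s = \mathbf A\mathbf R$ with $\mathbf R$ invertible, a non-constant common factor of the signal eigenvectors exists if and only if the steering polynomials $p_l(x)=\sum_{n=0}^{N-1}\alpha_l^n x^n$ share a root. From there the paper treats $\alpha$ as a second indeterminate and argues abstractly that the zero locus of the bivariate polynomial $f(\alpha,z)$ is a one-dimensional subset of $\mathbb{C}^2$, asserting that this "alone suffices" for the almost-sure claim; this is terse, since one must still intersect the $L$ conditions, restrict $\alpha_l$ to the unit circle, and push the exceptional set forward to the $\theta$-parameter space. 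You instead exploit the geometric-series factorization $p_l(x)=\bigl((\alpha_l x)^N-1\bigr)/(\alpha_l x-1)$ to read off the root set $\{\omega_k/\alpha_l\}$ exactly, which turns the common-root condition into the explicit arithmetic constraint $\pi(\sin\theta_l-\sin\theta_1)\equiv 2\pi m_l/N \pmod{2\pi}$ and identifies the exceptional DOA configurations as a concrete finite union of codimension-one sets. This buys two things the paper's version does not deliver: a fully rigorous measure-zero statement in the actual parameter space of DOAs (rather than in $\mathbb{C}^2$), and an exact description of which DOA tuples are degenerate, which is of independent practical interest. Your caveat that the statement silently presumes $L\geq 2$ is also well taken --- for $L=1$ the single signal eigenvector is proportional to $\mathbf a(\theta_1)$ and is its own (non-constant) GCD --- and the paper does not flag this.
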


\begin{proof}
From \eqref{eq:obsignal} we can write $\mathbf Q_s$ as $\mathbf A \mathbf R$ where $\mathbf R$ is some invertible matrix. Then the columns of $\mathbf Q_s$ share a root if and only if
\begin{align}
\mathbf Q_s^T \mathbf z &= \mathbf R^T \mathbf A^T \begin{bmatrix}
1 \\
z \\
\vdots \\
z^N-1
\end{bmatrix} = \boldsymbol{0}
\end{align}
for some $z \in \mathbb{C}$. Since $\mathbf R$ was invertible, the above equation has a solution only if $\mathbf A^T \mathbf z = 0$. That is the polynomials $f_l(z) = 1 + \alpha_l z + \alpha_l^2 z^2 + \cdots + \alpha^{N-1} z^{N-1} = 0$ simultaneously. Considering $\alpha$ to be a variable (since we do not know the $\alpha_l$ in question) the polynomial $f_l(z)$ becomes a multivariate polynomial $f(\alpha,z)$ whose zero locus is a set of dimension $1$ in the affine space $\mathbb{C}^2$. This alone suffices to show the almost sure claim made in the lemma.
\end{proof}

It is worthwhile to consider whether the zero locus of $f(\alpha,z)$ has multiple solutions $\alpha$ on the unit circle for a given $z$ at all, or in how many places. However, this is unnecessary for the preceding claims and we thus leave it as a conjecture.

\subsection{$\epsilon$-Ideal and Approximate GCD} 
In practice, the sample size is finite and typically small. As a result, the noise eigenvectors are perturbed from being exactly in a univariate ideal, which makes the application of the new noise subspace selection criterion and GCD-based algorithm explained in the previous subsection not straightforward. Indeed, the perturbed noise eigenvectors lie in an $\epsilon$-ideal \cite{BorderBases}. 

The $\epsilon$-ideal structure of $\mathbf Q_n$ implies that each noise eigenvector has, as a factor, the perturbed generator
\begin{align}
Q(x) = \prod_{l=1}^L (x - \alpha_l + \epsilon_l) \label{eq:epsilon}
\end{align}
where $\epsilon_l$ are small random perturbations \cite{BhaskarRao}. Then, the problem of finding an approximate GCD can be seen as finding $\hat{\alpha_l}$ such that 
\begin{align}
\hat{Q}(x) = \prod_{l=1}^L (x-\hat{\alpha_l})
\end{align}
is as ``close'' to \eqref{eq:epsilon} as possible, in some appropriate sense. What we know are the coefficients of \eqref{eq:epsilon} as some of the columns of the matrix $\mathbf Q$. There are two competing notions of ``closeness'' for approximate GCD's in computer algebra. The first is based on the perturbations of the coefficients themselves. The second is based on the roots of the polynomials. The algorithms we present in this paper use both. We investigate the consequences of the two definitions and their corresponding algorithms in the simulation section.

In the following derivation, we investigate ``nearness'' with respect to the polynomial coefficients. Polynomials are taken from $\mathbb{C}[z]$ with complex valued arguments and coefficients. Consider two polynomials $f(z) = \sum_{i=0}^{N} f_i z^i$ and $g(z) = \sum_{i=0}^{N} g_i z^i$ which do not have a non-trivial GCD, but have an approximate GCD of the form \eqref{eq:epsilon}. Given two polynomials $f(z)$ and $g(z)$ which ``almost'' share a root the goal is to perturb the coefficients of the polynomials $f(z)$ and $g(z)$ to polynomials $\hat{f}(z)$ and $\hat{g}(z)$ such that
\begin{align}
\hat{f}(\alpha) &= \hat{g}(\alpha) = 0
\end{align}
for some $\alpha \in \mathbb{C}$, where $\hat{f}_i = f_i + \lambda_i$ and $\hat{g}_i = g_i + \mu_i,\ \lambda_i,\ \mu_i\ \in \mathbb{C}$. Then $x - \alpha$ would be the GCD of $\hat{f}(z)$ and $\hat{g}(z)$. The GCD that corresponds to the minimal perturbation of the coefficients of $f(z)$ and $g(z)$ will then be known as the ``nearest GCD.'' In this sense, we will show that the formation of the root-MUSIC polynomial is optimal. That is, in order to solve the nearest GCD problem, the root-MUSIC polynomial results as the minimizer of a Lagrange multiplier problem involving the coefficient perturbations. The proof contained in this section first appeared in \cite{KarmarLakshman}. We reproduce it here for clarity and note that the minimizer of the Lagrange multiplier problem is, in fact, the root-MUSIC polynomial \eqref{eq:rMUSIC}.  

The problem of finding the nearest GCD defined above was first formulated in \cite{KarmarLakshman} as
\begin{align}
\min_{\boldsymbol{\mu,\lambda}}\ \boldsymbol{\epsilon} \quad \rm{s.t.}\ \hat{\mathit{f}}(\alpha; \boldsymbol{\mu},\boldsymbol{\lambda}) = \hat{\mathit{g}}(\alpha; \boldsymbol{\mu},\boldsymbol{\lambda}) = 0 \label{eq:poly_opt}
\end{align}
where $\boldsymbol{\epsilon} = \sum_{i=0}^{N} \lambda_i \lambda_i^* + \mu_i \mu_i^*$. Then the following theorem holds. The following derivation is an exposition of the proof in \cite{KarmarLakshman} and connects the theory of nearest GCDs to the solution of problem \eqref{eq:rMUSIC}.

\begin{theorem}
	The minimizer of $\boldsymbol \epsilon$ for $f_1,\cdots,f_k$ where $f_1,\cdots,f_k$ are polynomials whose coefficients are defined by the noise eigenvectors $u_1,\cdots,u_k$ is the root-MUSIC polynomial \eqref{eq:rMUSIC}.	
\end{theorem}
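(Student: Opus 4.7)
The plan is to prove this by carrying out the Lagrange-multiplier computation for \eqref{eq:poly_opt} explicitly, eliminating the perturbations and the multipliers to get a reduced cost that depends only on the candidate common root $\alpha$, and then recognizing this reduced cost as the root-MUSIC rational function $\mathbf d^H(\alpha)\mathbf Q_n \mathbf Q_n^H \mathbf d(\alpha)/\|\mathbf d(\alpha)\|^2$. The first step is to package the problem: for the $k$ noise-eigenvector polynomials $f_j(z)=\sum_{i=0}^{N-1}f_{i,j}z^i$ with additive perturbations $\lambda_i^{(j)}\in\mathbb C$, introduce one complex multiplier $\nu_j$ per constraint $\hat f_j(\alpha)=0$ and form the real Lagrangian
\begin{align}
\mathcal L \;=\; \sum_{j=1}^{k}\sum_{i=0}^{N-1}|\lambda_i^{(j)}|^{2} \;+\; 2\,\mathrm{Re}\!\left[\sum_{j=1}^{k}\nu_j\,\hat f_j(\alpha)\right].
\end{align}
Because the variables are complex, I will use Wirtinger calculus and differentiate $\mathcal L$ with respect to the conjugates $\lambda_i^{(j)*}$ and $\alpha^*$, which allows the constraints and their conjugates to be handled symmetrically without doubling the multiplier count.

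The second step is the optimality conditions. Differentiating with respect to $\lambda_i^{(j)*}$ yields
\begin{align}
\lambda_i^{(j)} \;=\; -\,\nu_j^{*}\,(\alpha^{*})^{i},
\end{align}
so the optimal perturbation of each polynomial is a scaled copy of $\mathbf d(\alpha^{*})$. Substituting this into the active constraint $f_j(\alpha)+\sum_i \lambda_i^{(j)}\alpha^{i}=0$ and using $\sum_i|\alpha|^{2i}=\|\mathbf d(\alpha)\|^{2}$ gives
\begin{align}
\nu_j^{*} \;=\; \frac{f_j(\alpha)}{\|\mathbf d(\alpha)\|^{2}}, \qquad
\|\boldsymbol\lambda^{(j)}\|^{2} \;=\; \frac{|f_j(\alpha)|^{2}}{\|\mathbf d(\alpha)\|^{2}}.
\end{align}
Summing the per-polynomial perturbation energies eliminates every auxiliary variable and leaves a reduced cost
\begin{align}
\boldsymbol\epsilon(\alpha) \;=\; \frac{\sum_{j=1}^{k}|f_j(\alpha)|^{2}}{\|\mathbf d(\alpha)\|^{2}} \;=\; \frac{\mathbf d^{H}(\alpha)\,\mathbf Q_n \mathbf Q_n^{H}\,\mathbf d(\alpha)}{\|\mathbf d(\alpha)\|^{2}},
\end{align}
where the last equality follows by identifying the columns of $\mathbf Q_n$ with the coefficient vectors of the $f_j$. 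This is exactly the root-MUSIC functional \eqref{eq:rMUSIC}, up to the unit-circle normalization $\|\mathbf d(\alpha)\|^{2}$, which is constant on $|\alpha|=1$ and is precisely what justifies the classical root-MUSIC selection rule of picking roots closest to the unit circle; the final step is to invoke the remaining stationarity condition $\partial\mathcal L/\partial \alpha^{*}=0$ only to confirm that the optimal $\alpha$ is a stationary point of this ratio, and hence a root of the root-MUSIC polynomial in the limit $|\alpha|\to 1$.

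The main obstacle I foresee is bookkeeping rather than conceptual: keeping the Wirtinger derivatives consistent for the complex-valued constraints and their conjugates, and being careful that the normalization $\|\mathbf d(\alpha)\|^{2}$ does not obscure the identification with \eqref{eq:rMUSIC}. A secondary subtlety is that the theorem statement identifies the \emph{minimizer} of $\boldsymbol\epsilon$ with a polynomial; the plan handles this by interpreting the claim as saying that, after eliminating the perturbations, the scalar cost $\boldsymbol\epsilon(\alpha)$ coincides with the root-MUSIC functional, so that the set of stationary $\alpha$'s of $\boldsymbol\epsilon$ is exactly the root set targeted by root-MUSIC. Everything else is routine algebra propagated from the two Wirtinger equations above.
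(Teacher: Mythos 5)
Your proof is correct and follows essentially the same route as the paper: form the Lagrangian for \eqref{eq:poly_opt}, solve the stationarity conditions for the perturbations, substitute back into the constraints to eliminate the multipliers, and identify the reduced cost $\sum_j |f_j(\alpha)|^2 / \|\mathbf d(\alpha)\|^2$ with the root-MUSIC functional on $|\alpha|=1$. Your use of Wirtinger derivatives with one complex multiplier per constraint is just a cleaner repackaging of the paper's real/imaginary-part multipliers $a,b,c,d$, and your direct treatment of $k$ polynomials subsumes the paper's two-polynomial case and its stated extension.
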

\begin{proof}
	Details of the following derivation are given in the appendix. Introducing the multipliers $a,b,c,d$, we write the Lagrangian of problem \eqref{eq:poly_opt} as
	\begin{align} \label{Lagrangian}
	\mathcal{L}(\lambda,\mu,a,b,c,d) &= \boldsymbol{\Sigma} + 2a\mathrm{Re}(\hat{f}(\alpha)) + 2b\mathrm{Im}(\hat{f}(\alpha)) +\nonumber\\ &2c\mathrm{Re}(\hat{g}(\alpha)) + 2d\mathrm{Im}(\hat{g}(\alpha))
	\end{align}
	At optimality with respect to $\boldsymbol{\lambda}$ and $\boldsymbol{\mu}$, the real and imaginary parts of both functions must be 0. Differentiating \eqref{Lagrangian} with respect to the real and imaginary parts of $\lambda_i$ and $\mu_i$ and solving for the multipliers by substituting back into the constraint of \eqref{eq:poly_opt}, it can be found that
	\begin{align}
	\lambda_i = -(a+jb)\alpha^{*i},\ &\quad \mu_i = -(c+jd)\alpha^{*i} \label{eq:lagrange1} \\
	a + jb = -\frac{f(\alpha)}{\sum_{k=0}^{N-1}(\alpha^* \alpha)^k},\    
	&\quad c + jd = -\frac{g(\alpha)}{\sum_{k=0}^{N-1}(\alpha^* \alpha)^k} \label{eq:lagrange2}.
	\end{align}
	
	By substituting \eqref{eq:lagrange2} into \eqref{eq:lagrange1}, the expression for $\boldsymbol \epsilon$ at its minimum can then be written entirely in terms of the original polynomials $f(z)$ and $g(z)$ and $\alpha$ as
	\begin{align}
	\boldsymbol{\epsilon}_{\rm min} &= \frac{\sum_{i=0}^{N-1}[f(\alpha)f^*(\alpha)\alpha^i \alpha^{*i} + g(\alpha)g^*(\alpha)\alpha^i \alpha^{*i}]}{(\sum_{k=0}^{N-1}(\alpha \alpha^*)^k(\sum_{k=0}^{N-1}(\alpha \alpha^*)^k)^*} \nonumber \\
	&= \frac{f(\alpha)f^*(\alpha) + g(\alpha)g^*(\alpha)}{\sum_{k=0}^{N-1}(\alpha \alpha^*)^k}. \label{eq:sigma_M}
	\end{align}
	
	The derivation in \cite{KarmarLakshman} of \eqref{eq:sigma_M} can easily be extended to an arbitrary number of polynomials, yielding the following formula for $\boldsymbol{\epsilon}_{\rm min}$
	\begin{align}
	\boldsymbol{\epsilon}_{\rm min} &= \frac{\sum_{l=1}^{L} f_l(\alpha) f_l^*(\alpha)}{\sum_{k=0}^{N-1} (\alpha \alpha^*)^k} \label{eq:karmar_music}
	\end{align}
	
	Noting that the noise eigenvectors $\mathbf u_{L+1},\cdots,\mathbf u_{N}$ are mutually orthogonal, and that $|\alpha| \approx 1$ we can write \eqref{eq:karmar_music} as
	
	\begin{align}
	\frac{1}{N}\mathbf d^H(z) \mathbf Q_n \mathbf Q_n^H \mathbf d(z)
	\end{align} 
	which completes the proof.
\end{proof}

The formation of the root-MUSIC polynomial is therefore optimal with respect to the polynomial coefficients encoded in the noise eigenvectors. What then remains is the root-selection algorithm, in other words, finding the minimum of the function $\boldsymbol{\epsilon}_{\rm min}$ with respect to $\alpha$. This corresponds to finding the values of $\alpha$ which correspond to the location of the sources. Writing $\alpha = u + jv, u,v \in \mathbb{R}$, $\boldsymbol{\epsilon}_{\rm min}$ becomes a real polynomial in two variables, which attains its minima at its stationary points
\begin{align}
\frac{\partial \boldsymbol{\epsilon}_{\rm min}}{\partial u} = 0,\    & \frac{\partial \boldsymbol{\epsilon}_{\rm min}}{\partial v} = 0. \label{eq:opt_cond}.
\end{align}
Fig. \ref{fi:AlgebraicGeometryWorks} depicts the intersection of the two algebraic plane curves described by equations \eqref{eq:opt_cond} along with the roots that correspond to the actual target locations. The solution of the above stationary points is investigated in Section \ref{subsec:karmarkar}. A certificate of the satisfaction of a necessary condition for optimality is produced.

\begin{figure}[t]
	\begin{center}
		\includegraphics[width=\linewidth,trim=0 250 0 250]{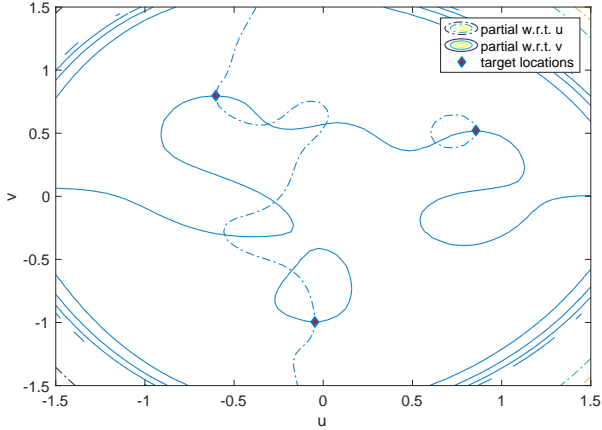}
	\end{center}
	\vspace{2ex}
	\caption{The 0 level cut of equations \eqref{eq:opt_cond} corresponding to the last two eigenvectors of the SCM corresponding to a system with 10 antennas, with 3 targets impinging on the array. The real intersection points clearly correspond to target locations.}
	\label{fi:AlgebraicGeometryWorks}
\end{figure}

\section{Algorithms}
\label{sec:Algorithms}
Several algorithms have been developed in the computer algebra literature to compute the approximate GCD of polynomials whose coefficients are either only imprecisely known, are perturbed by noise, or are in some other sense ``close'' to having a GCD, but for which Euclid's algorithm will fail to find an appropriate GCD \cite{KarmarLakshman}. In \cite{Pan}, a ``root-clustering'' algorithm was proposed whereby the roots of two polynomials $f(x)$ and $g(x)$ were compared pairwise, and deemed to have a common root if a pair of roots were found to exist within a radius $\delta$ of each other. Once a root pair $\{ x_{q,1}$, $x_{q,2} \}$ is identified, the corresponding root of the approximate GCD, $y_q$, is computed as the average of the two points. Finally, the approximate GCD, $h(x)$, is calculated as
\begin{align}
h(x) = \prod_{q = 1}^{Q} (x-y_q)
\end{align}
where $Q$ is in the number of discovered paired roots.

We extend this approach here to the array processing scenario considered, where additionally we will consider multiple polynomials instead of just two. Moreover, when calculating the ``true'' root we make use of the prior information that the roots corresponding to the sources must lie on the unit circle. The number of roots discovered in a cluster will be shown to be useful for determining the number of sources as well as for separating closely located sources, thus, addressing the open important problems of joint estimation of the number od sources and source parameters and non-idealities in the date model. We detail this method in Subsection~\ref{subsec:cluster}.

Existing approximate GCD algorithms, including the one highlighted above, provide a good estimate of the ``true'' GCD in most circumstances. However, they are based on heuristics and as such do not provide guarantees for how far the estimated GCD will be from the ``nearest'' GCD, where nearness is defined in the following way. Given a system of polynomials that does not have a non-trivial GCD, and therefore has no solution, the coefficients of the polynomials in the system can be perturbed such that the perturbed system does have a non-trivial GCD. The GCD corresponding to the system with minimum total perturbation from the original system is defined then to be the nearest GCD. As was shown in the previous section, this corresponds to the formation of the root-MUSIC polynomial, and solution of \eqref{eq:rMUSIC}. In this section we derive an algorithm which solves the nearest GCD problem with respect to the polynomial coefficients. As a result, we provide a certificate of a necessary condition for optimality.

The problem of finding the ``nearest'' GCD in the aforementioned sense is solved in 
\cite{KarmarLakshman} and \cite{Zeng},\cite{zeng1}. The authors there minimize the perturbation of a system of polynomials using the method of Lagrange multipliers subject to the perturbed system having a non-trivial GCD. It is shown that the solution to this problem is equivalent to finding the real intersection points of two algebraic plane curves. The author of \cite{Zeng},\cite{zeng1} solves this problem using only matrix-vector operations which are backward stable.\footnote{Thus, resolving the issues with Wilkinson's type polynomials.}

\subsection{Root Clustering Method}
\label{subsec:cluster}
The root clustering approach of \cite{Pan} proceeds in two steps. In the first step, roots that are ``close'' are paired together. Then in the second step, a ``consensus'' root is calculated based on each of these roots. 

For our problem here, since we consider more than two polynomials, we adopt a different clustering approach. There is a wide variety of clustering algorithms within the literature of machine learning \cite{Tibshirani}. Because we are looking for tight clusters, we adopt the hierarchical approach of agglomerative clustering. Agglomerative clustering starts by considering each data point as its own cluster. At each step, it calculates for each cluster the closest neighboring data point. The cluster which has the closest neighbor is then merged with this neighbor to form a new cluster, while the distance between the original cluster and the new member is stored. The distance by which the roots are compared is called a ``dissimilarity'', and it can, in general, be any pseudonorm. This process continues until all the data points are grouped in one large cluster, with all clusters, $c_i$, are ranked by their dissimilarity in a tree-structure known as a ``Dendrogram''.
\begin{figure}[t]
	\begin{center}
		\includegraphics[width=\linewidth,trim=0 250 0 215]{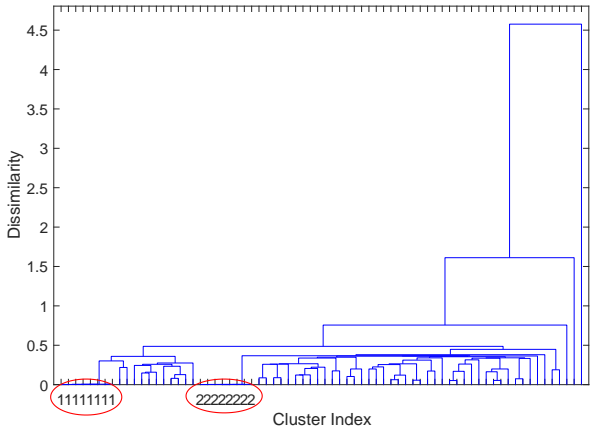}
	\end{center}
	\vspace{2ex}
	\caption{Root dendrogram corresponding to ULA with $N = 10$ elements, $L = 2$ targets at $30^o$ and $-40^o$. Target clusters are highlighted with red ellipses.}
	\label{fig:dendrogram}
\end{figure}

{\it Illustrative Example 2:} Fig.~\ref{fig:dendrogram} shows a dendrogram for a system consisting of a ULA with $N = 10$ elements and $L = 2$ sources impinging from $30^o$ and $-40^o$. A total of $T = 100$ snapshots were collected at an SNR of $10$dB. Two clusters of $8$ roots each with a dissimilarity close to $0$ can be easily observed in the figure and are marked by ellipses.

\begin{figure}[t]
	\begin{center}
		\includegraphics[width=\linewidth,trim=0 250 0 215]{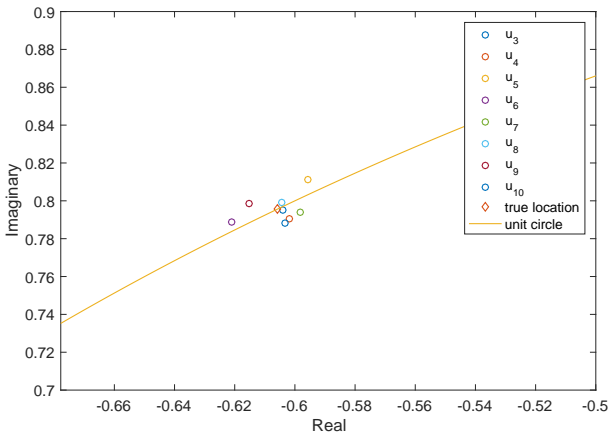}
	\end{center}
	\vspace{2ex}
	\caption{Root scatter corresponding to ULA with $N = 10$ elements, $L = 2$ targets at $31^o$ and $45^o$. Note that there are $N - L = 8$ roots in the cluster.}
	\label{fig:rootscatter}
\end{figure}

The number of ``tight'' clusters found in this process gives us the estimate of the number of sources $\hat{L}$. We note that the roots being clustered can be taken from either all eigenvectors or from a noise-subspace estimate, such as the one produced from a eigenvalue based detection method. Once the root clusters $c_i = \{|z_{i,1},\cdots,z_{i,J_i}\}$ are obtained, the clusters are treated as samples of a distribution parametrized by the ``true'' root. The remaining challenge is then to estimate the true root from the sample. In \cite{Pan}, this is just taken to be the centroid of the root cluster. However, in the context of the considered array processing scenario , the prior information that the root estimate must lie on the unit circle can be exploited. 

{\it Illustrative Example 3:} In Fig. \ref{fig:rootscatter}, the roots corresponding to the last $8$ eigenvectors of the sample covariance matrix $\hat{\mathbf R}_{xx}$ obtained for the same scenario considered in Illistrative Example.$~2$ are plotted, along with the unit circle and the target location. Each mark on the x-axis of Fig.~ \ref{fig:dendrogram} is a root of an eigenvector of $\mathbf R_{xx}$. If there is a bracket between two roots, then these roots are in a cluster. If there is a bracket between clusters, these two clusters are part of a larger cluster. The height of the bracket indicates the maximum dissimilarity between elements in that cluster. From Fig. $3$ it is clear that large, tight clusters are rare events which only exist as a result of the algebraic structure of the noise subspace. Thus this property is useful from a detection and estimation standpoint.

Since the estimate of the source location is based on the phase argument of the root in question, only distortions tangential to the unit circle contribute to estimation error \cite{BhaskarRao}. Radial distortions do not cause any error, and thus, the roots of the noise eigenvectors can be projected onto the unit circle by simply normalizing them by their corresponding absolute values. The estimate of the source location $\alpha_l$ is then given by the Fr\'echet mean of the roots in each cluster, as the unit circle is a Riemannian manifold. To calculate a Fr\'echet mean, an iterative procedure is required whereby the points in question are mapped onto an Euclidian tangent space (exponentiation) in which the mean of these points is calculated. This mean estimate is then mapped back onto the manifold (logarithm), and serves as the tangent point for the next iteration. For a detailed explanation on the theory and computation of Fr\'echet means we refer the reader to \cite{ArnauBarbarYang}. However, instead of calculating a Fr\'echet mean for estimating the source location, a simplear procedure can be used. In fact, it is apparent from Fig.~\ref{fig:rootscatter} that, since the roots are closely located, the ``small angle'' approximation can be used instead while incurring very little error. Thus, a ``good consensus'' root estimate is just $e^{-j\phi_{avg}}$ where $\phi_{avg}$ is the mean phase argument of the roots. The estimate of the DOA is then given by $\rm{sin} \left( \frac{1}{\pi} \phi_{avg,l} \right)$, where $\phi_{avg,l}$ is the average phase argument of the $l$-th cluster of roots, for $l \in 1,\cdots,\hat{L}$. These steps of estimating $\hat L$ and then the roots are summarized in Algorithm~$1$, where the input argument $\delta$ is 

\begin{algorithm}
	\label{alg:rootcluster}
	\begin{algorithmic}[1]
		\Procedure{Algorithm 1}{$\mathbf Q$, $\delta$}
		\State Compute roots of columns of $\mathbf Q \to \mathbf r$
		\State Agglomerative clustering on $\mathbf r$ into clusters $d(c_i) < \delta,\ c_i = \{z_{i,1},\cdots,z_{i,J_i}\}$
		\State $\hat{L} = |\{c_i,\forall i \  | \ J_i > 2 \}|$
		\For{$i \leq \hat{L}$}
		\State $\phi_{\mathrm{avg},i} = \frac{1}{|c_i|} \sum_{j=1}^{J_i} \angle z_{i,j}$
		\State $\theta_i = \rm{sin}^{-1} \left( \frac{1}{\pi} \phi_{\mathrm{avg},\mathit{i}} \right)$
		\EndFor
		\EndProcedure
	\end{algorithmic}
\end{algorithm}

Assuming that there are $L$ distinct targets, there must be $N-L$ eigenvectors of the form \eqref{eq:epsilon}. Thus, the roots corresponding to the source locations are estimated form $L$ tight clusters of $N-L$ roots each. Thus, not only do the roots contain information about the source location, but the number of roots in each cluster contains information about the number of sources. It is this information which can be then used to separate closely located sources. Specifically, if two sources are closely located, they will be grouped into the same cluster. According to \eqref{eq:epsilon}, a valid cluster cannot have more than $N-1$ roots, since the signal must occupy a subspace with at least dimension $1$ to which the resulting $N-1$ dimensional subspace must be orthogonal. Thus, tight clusters with more than $N-1$ roots are deemed to correspond to multiple closely located targets. Accounting for this possibility of resolving closely located sources, Algorithm 1 can be extended to exploit the above explained criterion to jointly estimate the number of sources and their locations. This results in Algorithm 2.

\begin{algorithm}
	\begin{algorithmic}[2]
		\Procedure{Root Clustering Algorithm}{$c_i$}
		\While{$|c_i| > N-1$}
		\State $\phi_{avg} = \frac{1}{J_i} \sum_{j=1}^{J_i} \angle z_{i,j}$
		\For{$j \leq J_i$}
		\If{$\angle z_{i,j} \leq \phi_{avg}$}
		$\angle z_{i,j} \to c_{i,1},\  c_{i,1} = c_{i,1}\cup \{\angle z_{i,j}\}$
		\Else {$\angle z_{i,j} \to c_{i,2},$} $\ c_{i,2} = c_{i,2} \cup \{z_{i,j}\}$
		\EndIf
		\State $\phi_{\mathrm{avg},i,k} = \frac{1}{|c_{i,k}|} \sum_{k=1}^{|c_{i,k}|} c_{i,k}$
		\State $\theta_{c_i} = \rm{sin}^{-1} \left( \frac{1}{\pi} \phi_{\mathrm{avg},i} \right)$
		\EndFor
		\EndWhile
		\EndProcedure
	\end{algorithmic}
\end{algorithm}

Algorithm 2 accepts as an input a cluster of roots. First, the algorithm checks whether the number of roots in the detected cluster is consistent with \eqref{eq:epsilon}. If the number of roots in the cluster is larger than $N-1$ the cluster is ``split.'' That is, provided the average phase argument of the roots in the cluster, the roots are grouped into those whose phase argument is larger, or less than the average phase argument. The algorithm then returns two DOAs estimated from the average phase argument of the two groups of roots.

The dominant task in both Algorithm $1$ and Algorithm $2$ in terms of computational complexity is the root-clustering algorithm. Given $N$ nodes, the $N^2$ pair-wise distances between nodes are computed, and stored in a sorted array of ``next-best'' distances. When two nodes are merged into a single cluster, the ``next-best'' distance array is updated, and the clustering algorithm continues on the basis of this new updated array. Each update can be done with complexity $\mathcal{O}(N)$, while the computation of the initial array requires complexity $\mathcal{O}(N^2)$. Thus, the overall complexity of the algorithm is $\mathcal{O}(N^2)$. If there are $K$ sources, $N-K$ noise eigenvectors are selected, each having $N-1$ roots. Thus the overall complexity of the root clustering algorithm, in terms of $N$ and $K$ is $\mathcal{O}((N^2 - N(K+1) + K)^2)$ in which $N^4$ dominates, resulting in a complexity bound $\mathcal{O}(N^4)$.

The high performance of these algorithms will be demonstrated in Section~\ref{Simul}.

\subsection{Nearest GCD}
\label{subsec:karmarkar}

In \cite{KarmarLakshman}, the nearest GCD algorithm proceeds in two steps. First, the points $\alpha$ are found, then $\lambda_i$ and $\mu_i$ are calculated from \eqref{eq:lagrange1} and \eqref{eq:lagrange2}, and added to the coefficients of $f(z)$ and $g(z)$ in order to recover the system of polynomials which has a non-trivial GCD. The problem of root-selection in root-MUSIC can thus be expressed as finding the real intersection points of two algebraic plane curves. 

There are algorithms to solve this problem in $N^3$ operations, however they all require symbolic computations. We instead adopt the approach of \cite{Zeng} to the problem posed in \cite{KarmarLakshman} which requires only numeric matrix-vector calculations and Gauss-Newton iterations. The derivation that follows is an exposition of the proofs in \cite{Zeng}. For details and complete proofs, we direct the reader to \cite{Zeng}.

The derivation of the algorithm in \cite{Zeng} follows naturally from the definition of the Sylvester matrix. Consider two polynomials $f(x)$ and $g(x)$ of degrees $m$ and $n$ respectively with a GCD $u(x)$ of degree $k < \ m,n$. Let the cofactors of $f(x)$ and $g(x)$ be denoted as $v(x)$ and $w(x)$ respectively. Recalling the definition of convolution matrices from Lemma $1$ we can write the identity $f(x)w(x) - g(x)u(x) = 0$ as

\begin{align}
\mathbf B_{n-k}(f) \mathbf w - \mathbf B_{m-k}(g) \mathbf v = \boldsymbol{0}
\end{align} 
where $m-k$ and $n-k$ are the degrees of the cofactors $v(x)$ and $w(x)$ respectively, and $\mathbf B_{n-k}(f)$ is the convolution matrix defined by the coefficients of $f$ with $n-k+1$ columns. The identity can then be rewritten as
\begin{align}
[\mathbf B_{n-k}(f) | \mathbf B_{m-k}(g)] \begin{bmatrix} \mathbf w\\
-\mathbf v
\end{bmatrix} = \mathbf S_k(f,g) \begin{bmatrix} \mathbf w \\
-\mathbf v
\end{bmatrix} = \boldsymbol{0}. \label{eq:sylv}
\end{align}
The matrix $\mathbf S$ is the known as the Sylvester matrix of polynomials $f$ and $g$ and is singular if and only if the two polynomials have a non-trivial GCD. This can be straight forwardly shown from the original identity via Bezout's theorem. The degree $k$ in the above expression can vary from $1$ to $\rm{deg}(u(x))$ resulting in different Sylvester matrices. The different Sylvester matrices have different nullities. Specifically $\mathbf S_{1}(f,g)$ has nullity equal to $k$, and $\mathbf S_{k}(f,g)$ has nullity equal to one. Thus, the cofactors $w(x)$ and $v(x)$ can be solved for by identifying the kernel of $\mathbf S_{k}(f,g)$ \cite{Zeng}. Once one has the cofactors $\mathbf w$ and $\mathbf v$, the GCD $\mathbf u$ can be found as the solution to the simultaneous linear system

\begin{align}
\mathbf B_k(v) \mathbf u = \mathbf f,\   \mathbf B_k(w) \mathbf u = \mathbf g. 
\end{align}
The estimate of the approximate GCD $\mathbf u$ in \cite{Zeng} is thus given as the solution to the following linear system 
\begin{align}
\begin{bmatrix}
\mathbf r^H \mathbf u - 1 \\
\mathbf B_k(v) \mathbf u \\
\mathbf B_k(w) \mathbf u
\end{bmatrix} = \begin{bmatrix}
0 \\
\mathbf f \\
\mathbf g
\end{bmatrix} \label{eq:zeng}
\end{align}
for $\mathbf u$. In the above system $\mathbf r$ is pre-defined a scaling vector. For example, if we wish $\mathbf u$ to be monic, $\mathbf r$ would be the vector $[1,0,\cdots,0]^T$. In the simulations, we set $\mathbf r$ to be equal to the initial estimate of the GCD, $\mathbf u_0$. Denoting the left hand side of \eqref{eq:zeng} as $\mathbf f(\mathbf z)$ where $\mathbf z = [\mathbf u, \mathbf v, \mathbf w]^T$ and the right side as $\mathbf b$ the solution sought is the least squares solution which minimizes the distance $\| \mathbf F - \mathbf b \|_2$. This is done via a Gauss-Newton iteration. The Jacobian of $\mathbf f(z)$ has a closed form expression in terms of the scaling vector $\mathbf r$ and convolution matrices $\mathbf B_k$ given as
\begin{align}
\mathbf J(\mathbf z) &= \begin{bmatrix}
\mathbf r^H & & \\
\mathbf B_k(v) & \mathbf B_{n-k}(u) & \\
\mathbf B_k(w) & & \mathbf B_{n-k}(u)
\end{bmatrix}
\end{align}
which leads to the following Gauss-Newton iteration
\begin{align}
\mathbf z_j = \mathbf z_{j-1} + \mathbf J(\mathbf z_{j-1})^{\dagger} [\mathbf f(\mathbf z_{j-1}) - \mathbf b ] \label{eq:gauss_newton}
\end{align}
where $(\cdot)^{\dagger}$ denotes the Moore-Penrose pseudo-inverse. With these defined the approximate GCD algorithm of \cite{Zeng} proceeds in two steps. In the first step, the degree of the approximate GCD,$\ k$, is estimated via the analysis of the nullity of a sequence of Sylvester matrices of varying $k$. Then, with the estimate $\hat{k}$, the Gauss-Newton iteration \eqref{eq:gauss_newton} is run until convergence in order to minimize the distance between the estimated polynomials $\hat{\mathbf f},\hat{\mathbf g}$ which have a non-trivial GCD by definition, and the observed polynomials $\mathbf f,\mathbf g$. The initial Sylvester matrix $\mathbf S_0(f,g)$ is defined as $[\mathbf f | \mathbf g]$ assuming, as in our case, that $f$ and $g$ have the same degree. In the exact GCD algorithm $S_k(f,g)$ is full column rank for all $k < n - \mathrm{deg}(u(x))$. Thus, one constructs a sequence of Sylvester matrices $S_k(f,g)$ until a singular $S_k(f,g)$ is found, after which the cofactors and $\mathbf u$ can be found using \eqref{eq:zeng}. In the approximate GCD algorithm, $S_k(f,g)$ will only be approximately singular. Thus a threshold must be set to decide whether a singular value of $\mathbf S_k(f,g)$ is close enough to $0$. This is defined in terms of an $\epsilon$ such that

\begin{align}
\boldsymbol{\epsilon} = \left \| \begin{bmatrix}\hat{\mathbf f} \\ \hat{\mathbf g}\end{bmatrix} - \begin{bmatrix}\mathbf f \\ \mathbf g\end{bmatrix}\right \|_2 < \zeta. \label{eq:eps}
\end{align}
Given such a $\zeta$, $\mathbf S_k (f,g)$ is said to be singular if it has a singular value $\sigma_{k,\mathrm{min}} < \epsilon \sqrt{2k - 2}$. If $\mathbf S_k(f,g)$ is approximately singular with respect to $\zeta$, then it is possible that $f$ and $g$ have an approximate GCD which satisfies \eqref{eq:eps}. Once a matrix $\mathbf S_k(f,g)$ is found with $\sigma_{k,\mathrm{min}} < \zeta$, the linear system and Gauss-Newton iterations \eqref{eq:zeng}, \eqref{eq:gauss_newton} are run to find an polynomial pair $\hat{\mathbf f}$ and $\hat{\mathbf g}$ with a non-trivial GCD of degree $n - k$. If \eqref{eq:eps} is satisfied, the solution to \eqref{eq:gauss_newton} $\mathbf u$ is returned as the approximate GCD. If not, then $k$ is increased and an approximate GCD is sought of degree $n-k-1$. This top down approach guarantees that the approximate GCD of maximal degree with respect to epsilon is found. At convergence of \eqref{eq:gauss_newton} we have the following condition.

\begin{align}
\mathbf J(\mathbf z)^{\dagger}(\mathbf f(\mathbf z) - \mathbf b) = 0 \label{eq:stationary}
\end{align}

This does not guarantee that the polynomial pair is of minimal distance from the manifold of polynomial pairs with GCD degree equal to $n-k$ to the observed polynomial pair $f,g$. However, it is a necessary condition for the nearest polynomial pair on this manifold. In cases of small $\epsilon$ - which in our context corresponds to high SNRs - it becomes more and more likely that satisfaction of \eqref{eq:stationary} ensures optimality. The satisfaction of \eqref{eq:stationary} is thus a certificate of a necessary condition of optimality.

The first step of this algorithm is merely a method to find an initial GCD estimate $\mathbf u$ and set of cofactors $\mathbf v$. Of course, given any initial estimate $\mathbf u$, it's optimality (or sub-optimality) can be verified and refined using the iteration \eqref{eq:gauss_newton}. Given initial estimates $\widehat{\alpha_l}$ of the source locations, one can produce the initial estimate $\mathbf u_0$ using the formula \eqref{eq:generator}. The cofactors corresponding to each eigenvector can then be found by solving the overdetermined system
\begin{align}
\mathbf B_{n-k}(u) \mathbf v_i &= \mathbf q_i
\end{align}
for $\mathbf v_i$. The matrices $\mathbf B_k (v_i)$ can then be formed from $\mathbf v_i$. The performance of this approach is verified in Sec.~\ref{Simul}, where the initial estimates are provided by the root-clustering algorithm.   

A detailed analysis of the convergence of \eqref{eq:gauss_newton} is given in \cite{Zeng},\cite{zeng1}. We only repeat the result that if the polynomial pair has an approximate GCD which satisfies \eqref{eq:eps}, then there exists some small positive real number $\mu$ such that \eqref{eq:gauss_newton} converges to the approximate GCD from any initial state $\mathbf z_0$ within distance $\mu$ of the observed polynomial pair. The total complexity of the root-certificate algorithm is $k \cdot \mathcal{O}((N(N-K)+1)^3)$, where $k$ is the total number of iterations required for convergence, $N$ is the number of antennas, and $K$ is the number of sources, resulting in a complexity bound of $\mathcal{O}(N^6)$. The total number of iterations is not known a-priori, and is thus investigated in \ref{Simul}.

From the definitions in \ref{sec:SelecCriterion} the estimate of the number of targets is then the degree of $u(x)$ and the estimates of the source locations are its roots. The aforementioned steps are summarized in Algorithm 3 (root-certificate algorithm).

\begin{algorithm}
	\label{alg:lagrange}
	\begin{algorithmic}[1]
		\Procedure{Root Certificate (uvGCD \cite{Zeng})}{$\mathbf q_i, \mathbf q_j$, $\epsilon$}
		\State $\mathbf S_0 (q_i,q_j) = \mathbf Q_0 \mathbf R_0$
		\For{$k = 0,...,N-2$}
		\State Obtain $\sigma_k$ from $\mathbf R_k$
		\If{$\sigma_k \leq \epsilon \sqrt{2k - 2}$}
		\State Solve $\mathbf S_k(f,g) \begin{bmatrix} \mathbf w \\
		-\mathbf v
		\end{bmatrix} = \boldsymbol{0}$ for $\mathbf v$ and $\mathbf w$
		\State Solve $\mathbf F(\mathbf z) - \mathbf b = 0$ for $\mathbf u$
		\State Iterate \eqref{eq:gauss_newton} until $\mathbf J(\mathbf z)^{\dagger}(\mathbf F(\mathbf z) - \mathbf b) = 0$
		\If{$\|[(\hat{\mathbf q_i}, \hat{\mathbf q_j}) - (\mathbf q_i,  \mathbf q_j)\|_2 \leq \epsilon$}
		\State Return $\mathbf u$ terminate Algorithm $3$
		\EndIf
		\EndIf
		\EndFor
		\State Return $\mathbf u = [1]$
		\EndProcedure
	\end{algorithmic}
\end{algorithm}

\subsection{Parameter Selection}

There is a physical sense in which to select the relevant parameters in Algorithms $1$ and $3$. The dissimilarity in Algorithm $1$ and the ball of radius $\zeta$ in Algorithm $3$ represent the perturbation of the roots of the eigenvectors $\mathbf u_k$. As these roots are each injectively related to the parameter being estimated, $\theta$, the perturbations themselves must be on the order of the estimation error given the SNR conditions. In order to address the selection of parameters for the root certification algorithm, we must relate uncertainty of the target locations to the perturbation of the coefficients of the noise eigenvectors. In essence, the Gauss-Newton iteration searches in a ball of radius $\zeta$ on the GCD manifold of dimension $k$, whereas the estimation error of the algorithm is quantified in terms of the root estimates themselves. To address this, we use a result proved in reference $[20]$. Let $u(x)$ and $\tilde{u}(x)$ be two polynomials of the same degree $n$, whose roots $\alpha_i$ and $\tilde{\alpha}_i$ are such that
\begin{align}
|\alpha_i - \tilde{\alpha}_i| \leq \delta
\end{align}
then $\|u(x) - \tilde{u}(x)\| \leq \|u(x)\| \cdot ((1+\delta)^n - 1)$ where $\| \cdot \|$ indicates, in this case, the $2$-norm of the polynomial coefficient vector. The resulting relation of the maximum pairwise root distances and the perturbation of coefficients allows us to relate target location uncertainty with the radius of search on the GCD manifold. Let $\triangle \theta$ be the maximum uncertainty of the target locations. Then, from the definition of our polynomial roots, we express the difference as
\begin{align}
\delta =&|e^{j \pi \rm{sin}(\theta_1)} - e^{j \pi \rm{sin}(\theta_1 - \triangle \theta)}| \nonumber \\
= &|e^{j \pi \rm{sin}(\theta_1)} - e^{j \pi \rm{sin}(\theta_1)\rm{cos}(\triangle \theta) - \rm{cos}(\theta_1) \rm{sin}(\triangle \theta)}| . \label{eq:approx2}
\end{align}

Using the small angle approximation we, rewrite \eqref{eq:approx2} as
\begin{align}
\delta \approx&|e^{j \pi \mathrm{sin}(\theta_1)} \cdot (1 - e^{- j \pi \mathrm{cos}(\theta_1) \mathrm{sin}(\triangle \theta)})| \nonumber \\
\leq &|1 - e^{-j \pi sin{\triangle \theta}}| \approx |1 - e^{-j \pi \triangle \theta}|
\end{align} 

{\it Illustrative Example 4:} Say we know the SNR conditions under which we are working, and as a result, we know that the uncertainty in our target estimates is $0.5^o$. Using the above relation, we set $\delta = 0.0274$. Assuming $K = 3$ targets, and $\|u(x)\| = 1$ we have a maximum perturbation of $\zeta = (1 + \delta)^3 - 1 = 0.0845$. We stress that this is a guideline for parameter selection, but on the basis of this analysis the suggested baseline is $\zeta = \mathcal{O}(10^{-2})$.

\section{Simulation Results}
\label{Simul}
To examine the performance of the proposed algorithms we consider two scenarios. The first, and most challenging, scenario is the case where we have multiple sources that are closely located. The second scenario is when we have multiple widely spaced sources. We consider two performance metrics. The first is the probability of detecting the correct number of sources. The second is the root mean squared error (RMSE) performance of the DOA estimates. Performance comparisons for the algorithms described in this paper are given with respect to AIC, MDL, and the Sequentially Rejective Bonferroni Procedure (SRBP) described in \cite{Zoubir}. To implement the SRBP, jack-knife bias correction is applied to both the eigenvalues of the SCM corresponding to the all $T$ observations, and the sets of bootstrapped eigenvalues. A total of $B = 200$ booststraps are collected of size $M = 30$ each. A global significance level of $\beta = 0.03$ is maintained for both scenarios. In addition to these two metrics, we provide histograms of the source number estimates. The root-certificate algorithm is implemented using the software uvGCD which is made freely available by the author of \cite{Zeng}. The Gauss-Newton iteration \eqref{eq:gauss_newton} was implemented by the authors. In all simulations we assume that the sources are impinging on a ULA of $N = 10$ elements, with element spacing $\lambda_c/2$, and $T = 100$ snapshots are collected. The sources are assumed to be equal power i.i.d. Gaussian zero mean sources with covariance $\sigma_s^2 \mathbf I$, and the noise assumed to be i.i.d. Gaussian with zero mean and covariance $\sigma_n^2 \mathbf I$. Throughout all simulations, the signal to noise ratio is varied from $-15$dB to $30$dB in $3$dB increments.

\subsection{Widely Separated Sources}

In terms of estimation and detection performance, our first point of inquiry is whether the proposed algorithms perform well in normal situations where subspace based DOA estimation algorithms are known to provide optimal or near-optimal performance. To perform this comparison, we consider the situation when two sources impinge on the array from directions $[-10^o\ 20^o]$ respectively. In order to test the RMSE performance of the root-clustering (RClA) and root-certificate algorithms (RCA) we compare them to Root-MUSIC. The RClA adaptively learns the number of sources from the root structure of the noise eigenvectors, root-MUSIC learns the number of sources via $MDL$ and the RCA estimates the number of sources from the GCD manifold of maximum co-dimension corresponding to the last two eigenvectors of $\mathbf R_{xx}$.

\begin{figure}[t]
	\vspace{4ex}
	\begin{center}
		\includegraphics[width=\linewidth,trim=0 250 0 250]{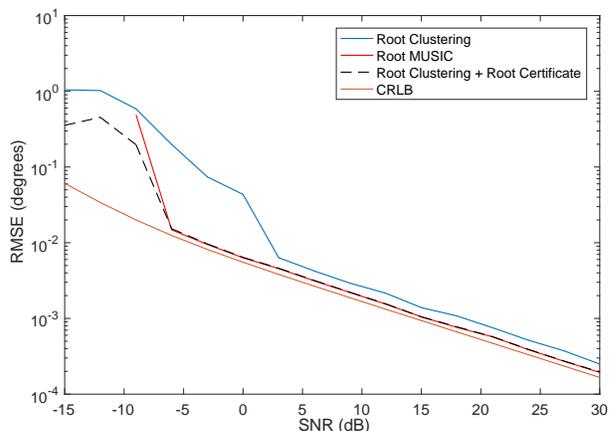}
	\end{center}
	\vspace{2ex}
	\caption{RMSE comparison of RClA and RCA with root-MUSIC in widely separated sources scenario. Two sources impinge on the array from directions $\Theta = [-10^o \ 20^o]$ respectively.}
	\label{fi:wide_rmse}
\end{figure}

Fig. \ref{fi:wide_rmse} depicts the performance comparison between RClA and RCA and root-MUSIC with respect to RMSE. 
As one expects, root-MUSIC performs near-optimally from $-3$dB SNR onwards. Algorithm $2$ converges later but still performs well in a usable and practical SNR range from $0$dB to $10$dB. It's worth noting that the gap between RClA and root-MUSIC never closes at high SNRs. This is likely a consequence of Theorem $1$. However, once RCA is applied to the initial estimate provided by RClA the gap closes between root-MUSIC and root-clustering. In sum, the proposed methods sacrifice no estimation performance over the whole usable SNR range of subspace based methods. The only price is thus increased computational complexity. 

Fig. \ref{fi:wide_detect} depicts the detection performance of RClA and RCA in comparison with MDL and AIC. AIC detects the correct number of sources before any other method, but this is due mostly to the tendency of AIC to overestimate the number of sources. MDL correctly estimates the number of sources with $100\%$ accuracy starting from $-6$dB followed by RClA which estimates the correct number of sources with $97\%$ starting from $-3$dB, after which perfect detection is observed. Neither RCA nor AIC ever achieve perfect detection, however, RCA considerably outperforms AIC in high SNR conditions. The performance of the SRBP tracks very closely to that of MDL (as is predicted by theory in large sample sizes), and that of the root-clustering method in the low SNR regime. However, at high SNRs, the SRBP never achieves $100 \%$ at this significance level. This is due to over-estimation. One could restrict the significance level to achieve $100 \%$ correct detection at high SNRs, but this necessarily will induce under-estimation in the low-SNR regime, in which it is likely that fewer null-hypotheses are able to be rejected at any significance level. As Fig. \ref{fi:hist_wide} demonstrates, underestimation at the significance level $\beta = 0.03$ is already a problem for the SRBP at $-6$dB SNR.

\begin{figure}[t]
	\vspace{4ex}
	\begin{center}
		\includegraphics[width=\linewidth,trim=0 250 0 250]{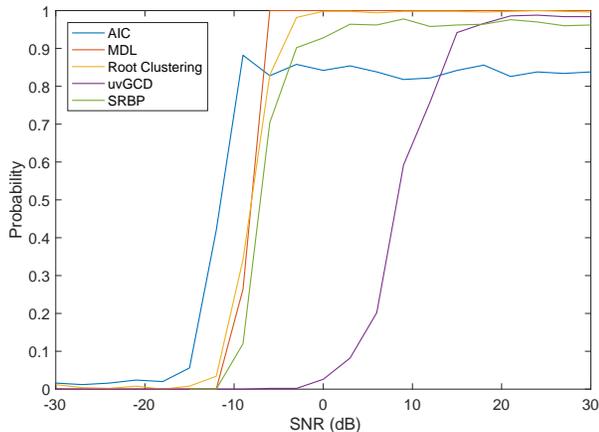}
	\end{center}
	\vspace{2ex}
	\caption{Probability of detection of correct number of sources comparison for AIC, MDL, and Algorithms $2$ and $3$.}
	\label{fi:wide_detect}
\end{figure}

\begin{figure}[t]
	\vspace{4ex}
	\begin{center}
		\includegraphics[width=\linewidth,trim=0 250 0 250]{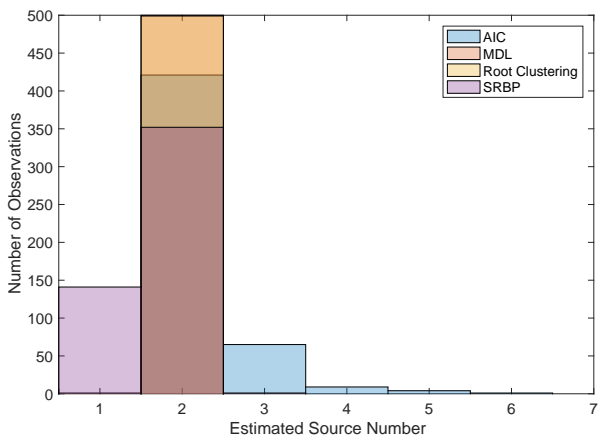}
	\end{center}
	\vspace{2ex}
	\caption{Histogram comparison of source number estimates of AIC, MDL, SRBP, and RClA for SNR $= 0$dB and two sources impinging from $\Theta = [-10^o\ 20^o]$ respectively.} 
	\label{fi:hist_wide}
\end{figure}

Fig. \ref{fi:hist_wide} depicts the histogram of source estimates for RClA in comparison with MDL and AIC at $0$dB SNR. RCA is excluded from the figure for the reason that it never corrects the correct number of sources in this scenario. RClA correctly identified the number of sources in $499$ out of $500$ Monte Carlo trials. More importantly, RClA never overestimates the number of sources, while AIC sometimes wildly overestimates the number of sources. 

\subsection{Closely Located Sources}

We consider the case illustrated in Example $1$ earlier. Two equal power sources impinge on the array from directions $[31^o\ 32^o]$. RClA is employed to first detect the number of sources and then estimate their DOAs. In order to test the detection performance, we compare RClA to both AIC and MDL. In order to test the RMSE performance, we compare RClA to root-MUSIC which has been provided with the correct number of targets for all SNRs. RClA, by comparison, learns the number of targets before performing the DOA estimation.  

\begin{figure}[t]
	\vspace{4ex}
	\begin{center}
		\includegraphics[width=\linewidth,trim=0 250 0 250]{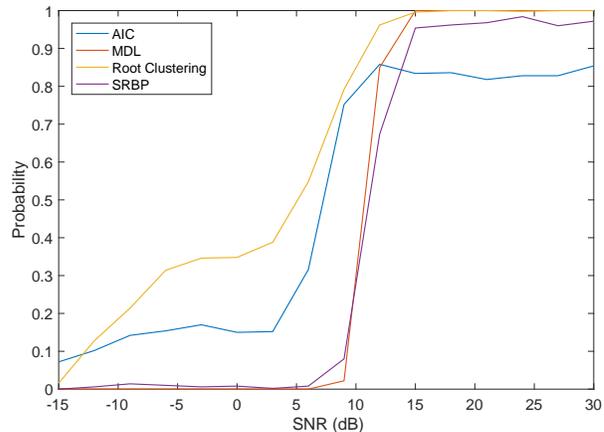}
	\end{center}
	\vspace{2ex}
	\caption{Performance comparison between AIC, MDL, and RClA in terms of probability of correctly learning the number of sources in the case of two closely located sources at $\Theta = [31^o\ 32^o]$.}
	\label{fi:prob_aicmdl}
\end{figure}

Fig. \ref{fi:prob_aicmdl} demonstrates the detection performance of RClA while SNR is varied. We notice two separate performance gaps between the proposed method and AIC and MDL. As previously noted, AIC has a tendency to overestimate the number of sources. This explains the fact that AIC never detects the correct number of targets 100$\%$ of the time. As such, our algorithm enjoys a permanent performance advantage over AIC in high SNR scenarios. By contrast, MDL tends to underestimate sources. In Fig. \ref{fi:prob_aicmdl} we observe a rapid transition in detection probability for MDL. Before approximately $6$dB SNR, MDL never correctly estimates the number of sources. Thus, Algorithm $2$ enjoys a significant performance advantage in the low SNR region. By $12$dB, however, MDL correctly estimates the number of sources $100 \%$ of the time. The performance of the SRBP tracks very closely the performance of MDL. However, the root clustering algorithm is able to resolve the targets at a much lower SNR, and perfectly at high SNRs.

\begin{figure}[t]
	\vspace{4ex}
	\begin{center}
		\includegraphics[width=\linewidth,trim=0 250 0 250]{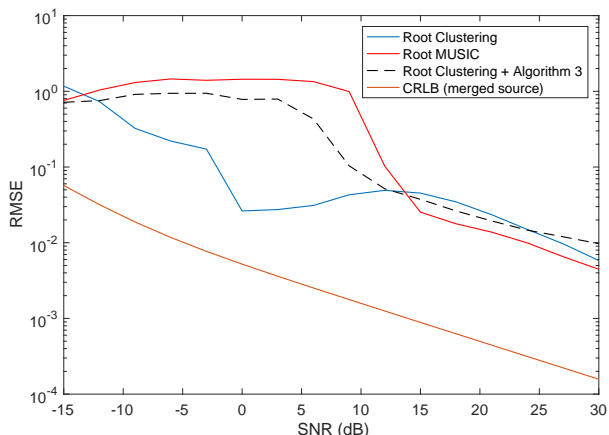}
	\end{center}
	\vspace{2ex}
	\caption{Comparison between RClA, and RCA and root-MUSIC for closely targets located at $\Theta = [31^o\ 32^o]$.}
	\label{fi:Multisource}
\end{figure}

If it is possible to estimate the SNR conditions in which the receiver is operating, the idea of using AIC in low SNR conditions and MDL in high SNR conditions may seem natural. However, as we see in Figs. \ref{fi:hist_wide} and \ref{fi:hist} AIC significantly overestimates the number of sources. Morever, as we see in Fig. \ref{fi:Multisource}, even when provided with the correct number of sources, in low to moderate SNR regions, root-MUSIC is not able to provide an accurate estimate of their location when compared with RClA. In Fig. \ref{fi:Multisource} we plot the Cramer-Rao bound corresponding to a single merged source at $\theta = 31.5^o$ in order to provide a lower bound for the performance of the algorithms. Since the Cramer-Rao bound for a single target should be lower than for multiple targets, the estimate variance is lower-bounded, but clearly not tightly so. RClA estimates the number of targets based on information that is actually used to provide the estimate of the source direction. Thus, if RClA is able to correctly detect the number of targets, it will provide a good estimate of their location as well. This is in contrast with root-Music which struggles to localize both targets outside of the high SNR regime.

\begin{figure}[t]
	\vspace{5ex}
	\begin{center}
		\includegraphics[width=\linewidth,trim=0 250 0 250]{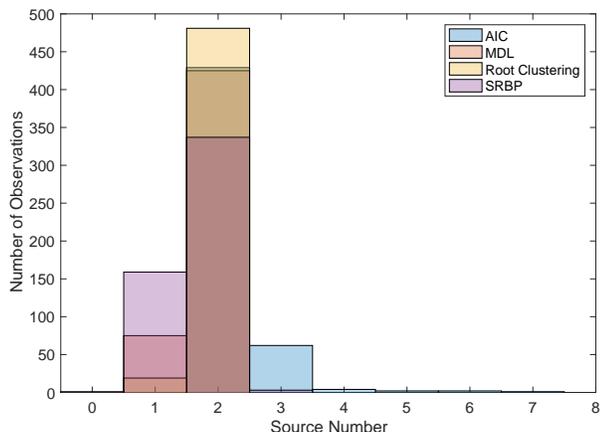}
	\end{center}
	\vspace{2ex}
	\caption{Histogram of number of source estimates for RClA, MDL, and AIC for closely located sources at SNR $= 3$dB.}
	\label{fi:hist}
\end{figure}

As a result, while root-MUSIC enjoys a performance advantage in scenarios with widely separated sources and in high SNR regions, RClA provides acceptable performance in all scenarios, and highly robust performance in the case of closely located sources in practical SNR regions.

In summation, the root-clustering algorithm and the root-certificate algorithm leverage two different concepts of perturbation from membership in a polynomial ideal. The first considers perturbations of roots, while the second considers perturbations of the coefficients. The simulations in these scenarios demonstrate the practical use of both. The root-clustering method offers superior performance detection to eigenvalue based methods. The root certificate algorithm supplies a notion of estimation optimality with respect to the observed system of polynomials.

\subsection{Convergence of Root-Certificate Algorithm}

We will first investigate the convergence rate of the Gauss-Newton iteration \eqref{eq:gauss_newton} in both the widely separated sources, and closely located sources scenario. Fig. \ref{fi:converge_wide} depicts the convergence rate of the Gauss-Newton iteration in the presence of widely separated sources, while Fig. \ref{fi:converge_close} depicts the convergence rate in the presence of closely located sources.

\begin{figure}[t]
	\vspace{4ex}
	\begin{center}
		\includegraphics[width=\linewidth,trim=0 250 0 250]{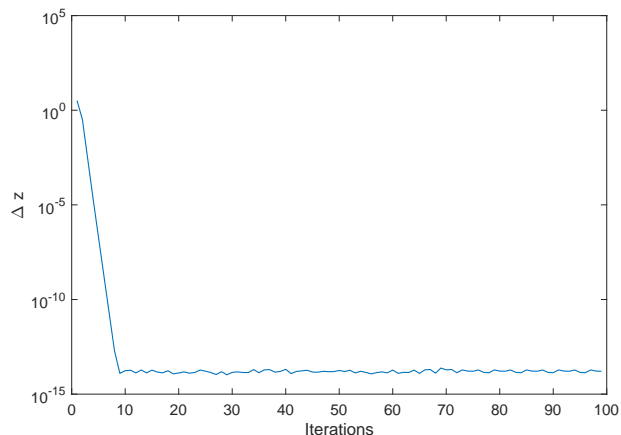}
	\end{center}
	\vspace{2ex}
	\caption{Convergence of the Root Certificate Algorithm. Two sources impinge on the array from directions $\Theta = [-10^o \ 20^o]$, respectively. Initial target estimates are provided by the Root-Clustering Algorithm.}
	\label{fi:converge_wide}
\end{figure}

\begin{figure}[t]
	\vspace{4ex}
	\begin{center}
		\includegraphics[width=\linewidth,trim=0 250 0 250]{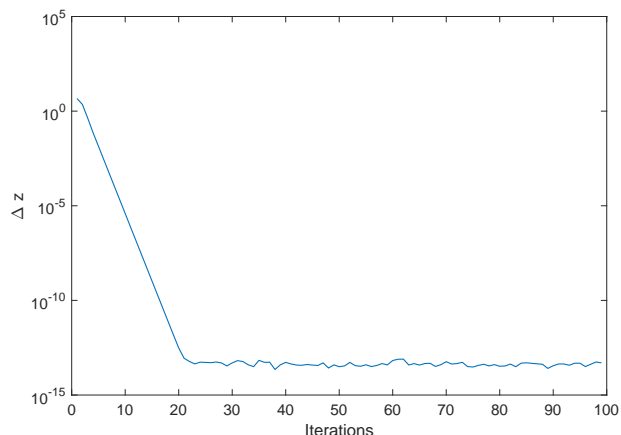}
	\end{center}
	\vspace{2ex}
	\caption{Convergence of the Root Certificate Algorithm. Two sources impinge on the array from directions $\Theta = [31^o \ 32^o]$, respectively. Initial estimates are provided by the Root-Clustering Algorithm.}
	\label{fi:converge_close}
\end{figure}

In Figs. 1 and 2, we observe that convergence of the Gauss-Newton iteration is linear. Fig. 1 demonstrates that, in the case of widely separated sources, the Gauss-Newton iteration converges to machine precision in less than $N$ iterations. In the case of closely located targets, we observe that convergence requires more iterations due to the fact that the system of polynomials is ill-conditioned. However, convergence remains linear.

In terms of detection algorithms, it should be stated that the SRBP has also high complexity. Separate eigendecompositions must be made $M$ times for each of $B$ bootstraps. In terms of $N$ however, $B$ can be on the order of $N^2$ (or larger), and $M$ can be on the order of $N$. Given that $M \cdot B$ eigendecompositions must be made, the total complexity in terms of $N$ becomes $\mathcal{O}(N^6)$, which is roughly equivalent to the root-certificate algorithm. The root-clustering algorithm, which we use for detection has complexity $\mathcal{O}(N^4)$. Moreover, the root-certificate algorithm is an estimation algorithm, whereas the SBRP is a detection algorithm.

\section{Conclusions and Discussion} \label{Concl}

A new criterion for the simultaneous detection and localization of an unknown number of sources has been introduced, and two algorithms based on this criterion were proposed. The proposed algorithms leverage the underlying algebraic structure present in the noise-eigenspace of the observation covariance matrix for both detection and estimation. The essential difference between our approach, and the dominant approach of source detection based on eigenvalues, and estimation based on eigenvectors is that we aim to estimate a single object, the maximal degree GCD, of a subset of the observed eigenvectors. The properties of this object are then the solutions to the separate problems. Specifically, the maximal degree is the number of targets, and the factors themselves are injectively related to the target locations. Notably, this implies that for every degree of this GCD, there is a factor which provides the estimate of the target location. In illustrative example 1, however, it was shown that this is not necessarily true of the eigenvalue-detection eigenvector-estimation paradigm. Targets could be estimable based on the observed eigenvectors, but not detectable on the basis of the eigenvalues alone. The first algorithm uses the root structure of the noise eigenvector. The second algorithm uses structured matrices to estimate the co-dimension of a solution manifold, and then estimate the nearest point on this manifold from the observed eigenvectors. The proposed algorithms have been compared to root-MUSIC in two different scenarios. In the first scenario, the number and locations of widely separated sources were estimated. In the second, the number and locations of closely located sources were estimated. Simulation results show that the first proposed algorithm offers significant performance benefits over information theoretic detection algorithms and localization using root-MUSIC in the case of closely located sources over a broad and practical range of SNRs.

The fundamental finding of the paper is that the noise subspace of the SCM, under the assumptions in the paper, is imbued with an algebraic structure which happens to be closed under any and all linear operations. This is not true of the signal subspace. More to the point, when we conduct an eigendecomposition of the SCM what are returned are linear invariants of the matrix, not the signal vectors themselves. The signal vectors themselves are all, by definition, points on the rational normal curve. While this is a powerful algebraic structure, it is not, in general, descriptive of the signal eigenvectors. However, it is rather trivial to show that any member of the noise subspace, invariant or not, has the algebraic structure which is leveraged by the analysis and algorithms contained in the paper. Thus, as long as we have a way of tracking the noise subspace, it is possible to use the algorithms contained in this paper.

One obvious limitation to this method is that it is suitable for point sources only. It is unclear how to adapt the algorithm to the estimation of spread sources, since univariate polynomial ideals, by definition have points as their corresponding varieties. As a limitation not necessarily of the algorithms in the paper, but of the analysis presented in the paper, we do not consider correlated sources, or of non-Gaussian noise distributions. Investigations of these scenarios are left for future work.

Another apparent limitation of the analysis contained in the main body of this work rely on the structure of a uniform linear array (ULA). However, if one allows for a pre-processing step, the analysis in this paper can be applied to arrays of arbitrary planar geometry. We showed that the noise eigenvectors of the sample covariance matrix exist in a principal polynomial ideal generated by the polynomial

\begin{align}
Q(x) = \prod_{l=1}^L (x - \alpha_l) 
\end{align}
where $\alpha_l$ are complex numbers whose phase arguments correspond to the target locations. However, from the definition of $\mathbf A$ in \eqref{eq:cov_matrix} it is clear that the polynomial structure of the noise subspace is a direct consequence of the uniform linear spatial sampling of the array.

In \cite{KoivunenBelloni} it was shown that the root-MUSIC algorithm can be applied to arbitrary arrays through a ``manifold separation'' technique. To perform the manifold separation technique, the Jacobi-Anger expansion is applied to the steering vectors of an arbitrary array to map them into those of a uniform array. The $n$-th element of the steering vector corresponding to an arbitrary planar array can be written as

\begin{align}
[\mathbf b(\theta)]_n &= e^{-j \omega_c \tau_n (\theta)} \label{eq:arb_array}
\end{align}
where $\omega_c \triangleq 2 \pi f_c$ is the angular frequency of the carrier waveform, and $\tau_n(\theta) \triangleq (-r_n/c)\rm{cos}(\gamma_n - \theta)$ is the propagation delay between the $n$-th antenna element and the centroid of the array, where $-r_n$ is the distance between the $n$-th antenna element and the array centroid, $c$ is the propagation velocity, and $\gamma_n$ is the polar position of the $n$-th antenna element.

Using the Jacobi-Anger expansion, \eqref{eq:arb_array} can be rewritten as 

\begin{align}
e^{j \kappa r_n \rm{cos}(\gamma_n - \theta)} &= \sum_{m = -\infty}^{\infty} j^m J_m(\kappa r_n) e^jm(\gamma_n - \theta) \nonumber \\
&= \frac{1}{\sqrt{2 \pi}}\sum_{m=-\infty}^{\infty} [\mathbf{G}(r_n,\gamma_n)]_{n,m} e^{-jm\theta} \label{eq:JA_vec}
\end{align}
where $[\mathbf{G}(r_n,\gamma_n)]_{n,m} \triangleq \sqrt{2 \pi} j^m J_m(\kappa r_n) e^{jm\gamma_n}$ and $J_m(\cdot)$ is the Bessel function of the first kind. From \eqref{eq:JA_vec} it is clear that the Jacobi-Anger expansion maps the non-polynomial steering vector corresponding to an arbitrary array in finite dimension, to a polynomial steering vector of infinite dimension. A finite approximation of $\mathbf b(\theta)$ is possible by arranging the coefficients $[G(r_n,\gamma_n)]_{n,m}$ in an $N \times M$ matrix $\mathbf{G}$. The finite approximation of the steering vector can thus be written as

\begin{align}
\mathbf b(\theta) \approx \mathbf G \mathbf h(\theta) \label{eq:arbarray_approx}
\end{align}
where $[\mathbf h(\theta)]_m \triangleq \frac{1}{\sqrt{2 \pi}} e^{jm \theta}$. The natural question of how large the sampling matrix $\mathbf G$ should be is both beyond the scope of this paper, and treated at some length in the reference \cite{KoivunenBelloni}.

With \eqref{eq:arbarray_approx} in mind, we can rewrite \eqref{eq:obsignal} and \eqref{eq:cov_matrix} as

\begin{align}
\mathbf{x}(t) &= \mathbf B \mathbf \mathbf s(t) + \mathbf{n}(t) \nonumber \\
&= \mathbf G \mathbf H \mathbf s(t) + \mathbf{n}(t)
\end{align}
\begin{align}
\mathbf R_{xx} &\approx \sigma_s^2 \mathbf{G H H}^H \mathbf G^H + \sigma_n^2 \mathbf I
\end{align}
respectively, where $\mathbf H$ has columns $\mathbf h(\theta)$ corresponding to each target.

Following the analysis of the previous section, the range of matrix product $\mathbf{GH}$ is contained in the span of the columns of the signal eigenvector matrix $\mathbf Q_s$ and is orthogonal to the noise eigenvector matrix $\mathbf Q_n$. One can then write the root-MUSIC polynomial for an arbitrary array as

\begin{align}
\mathbf h^H(\theta) \mathbf G^H \mathbf Q_n \mathbf Q_n^H \mathbf{Gh}(\theta)
\end{align}
after which the analysis in the main submission holds. Specifically, the columns of the matrix $\mathbf G^H \mathbf Q_n$ must lie in a principal polynomial ideal generated by \eqref{eq:generator}.

\appendix
We start by expressing $\boldsymbol{\epsilon}$ and finding its derivatives with respect to the real and imaginary parts of $\lambda_i$
\begin{align}
\boldsymbol{\epsilon} = \sum_{i=0}^{n-1} \lambda_i \lambda_i^* + &\sum_{i=0}^{n-1} \mu_i \mu_i^* + 2a \rm{Re} (\hat{\mathit{f}}(\alpha)) + 2b \rm{Im} (\hat{\mathit{f}}(\alpha))\\ &+ 2c \rm{Re} (\hat{\mathit{g}}(\alpha)) + 2d \rm{Im} (\hat{\mathit{g}}(\alpha)) \nonumber \\
\frac{\partial \boldsymbol \epsilon}{\partial \rm{Im} (\lambda_i)} &= 2\rm{Im}(\lambda_i) + 2\mathit{a}(-\rm{Im}(\alpha^i)) + 2\mathit{b}(\rm{Re}(\alpha^i)) \nonumber \\
\frac{\partial \boldsymbol \epsilon}{\partial \rm{Re}(\lambda_i)} &= 2\rm{Re}(\lambda_i) + 2\mathit{a}(Re(\alpha^i) - Im(\alpha^i))\\  &+ 2\mathit{b}(Im(\alpha^i)) \nonumber
\end{align}
Adding real and imaginary parts into a single equation at optimality, we write
\begin{align}
\lambda_i + \bigg( a \rm{Re}(\alpha^i) - &\mathit{b} \rm{Im}(\alpha^i) + j ( \mathit{b} \rm{Re}(\alpha^i) - \mathit{a} \rm{Im}(\alpha^i))\bigg) = 0 \nonumber \\
&\lambda_i + (a + jb)\alpha^{*i} = 0 \nonumber \\
&\lambda_i = -(a + jb)\alpha^{*i}\nonumber
\end{align}
Combining this equation with the definition of $\hat{\mathit{f}}(\alpha)$ and noting that $\hat{\mathit{f}}(\alpha) = 0$ one obtains
\begin{align}
\hat{\mathit{f}}(\alpha) = 0 &= \sum_{i=0}^{n-1} (f_i - (a + bj)\alpha^{*i})\alpha^i \nonumber \\
0 &= \sum_{i=0}^{n-1} f_i \alpha^i - (a + jb) \sum_{i=0}^{n-1} \alpha^{*i} \alpha^i \nonumber \\
a + jb &= \frac{f(\alpha)}{\sum_{i=0}^{n-1} \alpha^{*i}\alpha^i} \nonumber \\
&\implies \lambda_i = -\frac{f(\alpha)}{\sum_{i=0}^{n-1} \alpha^{*i}\alpha^i}\alpha^{*i} \nonumber
\end{align}
The derivation for $\mu_i$ is identical since all terms with $\lambda_i$ are constant with respect to $\mu_i$ and decoupled. The only thing that changes is the variable names. Noting this readily yields the equations
\begin{align}
c+jd = \frac{g(\alpha)}{\sum_{i=0}^{n-1} \alpha^{*i}\alpha^i},\         \mu_i = -\frac{g(\alpha)}{\sum_{i=0}^{n-1} \alpha^{*i}\alpha^i} \alpha^{*i} \nonumber
\end{align}
Substitution back into the equation for $\boldsymbol \epsilon$ yields the following formula for $\boldsymbol \epsilon_{\rm min}$
\begin{align}
\boldsymbol \epsilon_{\rm min} &= \frac{f(\alpha) f^*(\alpha) + g(\alpha) g^*(\alpha)}{\sum_{i=0}^{n-1}(\alpha^* \alpha)^i} \nonumber
\end{align}
To extend this derivation to an arbitrary set of polynomials, we rename the polynomials and the perturbation coefficients as $f_l(\alpha)$ and $\lambda_{i,l}$ respectively. Following the preceding derivation for $\frac{\partial \boldsymbol \epsilon}{\partial \lambda_{i,j}} = 0$ will yield the equations
\begin{align}
\lambda_{i,l} = -\frac{f_l(\alpha)}{\sum_{k=0}^{n-1}(\alpha^* \alpha)^k} \nonumber \\
\implies \boldsymbol \epsilon_{\rm min} = \frac{\sum_{l=1}^{L} f_l(\alpha) f_l^*(\alpha)}{\sum_{k=0}^{n-1}(\alpha^* \alpha)^k}. 
\end{align}

\bibliographystyle{IEEEbib}

\begin{thebibliography}{1}
	
\bibitem{VanTrees} H.~L.~Van Trees, \emph{Detection, Estimation, and Modulation Theory. Part IV: Optimum Array Processing.} New York: Wiley-Interscience, 2002.

\bibitem{StoicaMoses} P.~Stoica and R.~Moses ``Spectral Analysis of Signals,'' New Jersey: Prentice Hall, 2005

\bibitem{BhaskarRao} B.D.~Rao and K.V.S.~Hari ``Performance analysis of root-MUSIC,'' {\it IEEE Trans. Acous., Speech, and Signal Processing},  vol.~37, no~.12, pp.~1939--1949, Dec~1989.

\bibitem{Schmidt} R.O.~Schmidt ``Multiple Emitter Location and Signal Parameter Estimation,'' \emph{IEEE Trans. Antennas Propagation}, Vol. AP-34 March 1986, pp.276--280.

\bibitem{EsaVisa} E.~Ollila, V.~Koivunen ``Robust Antenna Array Processing Using M-Estimators of Pseudo-Covariance'' in \emph{Proc. 14th IEEE Conf. on Pers., Indoor, Wireless Comms.}, Sept.~7-10, 2003, Beijing, China.

\bibitem{LScharf} J.K.~Thomas, L.L.~Scharf, D.W.~Tufts ``The probability of a subspace swap in the SVD,'' \emph{IEEE Trans. Signal Process.}, vol.~43, no.~3, pp.~730--736, March~1995.

\bibitem{MahdiMe} M.~Shaghaghi and S.~A.~Vorobyov, ``Subspace leakage analysis and improved DOA estimation with small sample size,'' \emph{IEEE Trans. Signal Process.}, vol.~63, no.~12, pp.~3251--3265, June~2015.

\bibitem{KailathWax} M.~Wax and T.~Kailath, ``Detection of signals using information theoretic criteria,'' \emph{IEEE Trans. Acous. Speech and Signal Process.}, vol.~22, no.~2, pp.~387--392, Apr.~1985.

\bibitem{Rissanen} J.~Rissanen ``Modeling by shortest data description,'' \emph{Automatica} vol.~14, no.~3, pp.~465--471, Mar.~1978.

\bibitem{XuPierreKaveh} W.~Xu, J.~Pierre, and M.~Kaveh ``Practical detection with calibrated arrays,'' in \emph{Proc. 6th SP Conf. on Stat. and Array Process.} pp.82-85 Victoria, Canada, 7-9 Oct. 1992

\bibitem{Zoubir} R.F.~Brcic, A.M.~Zoubir, P.~Pelin ``Detection of sources using bootstrap techniques,'' \emph{IEEE Trans. Signal Process.}, vol.~50, no.~2, pp.~206--215, Feb.~2002.

\bibitem{KarmarLakshman} N.K.~Karmarkar and Y.N.~Lakshman ``On Approximate GCDs of Univariate Polynomials,'' \emph{J. Symbolic Computation}, vol.~26, pp.~653--666, 1998. Article no. sy980232.

\bibitem{IVA} D.~Cox, J.~Little, D.~O'Shea \emph{Ideals, Varieties, and Algorithms}. Third Edition. \hskip 1em plus 0.5em minus 0.4em\relax Springer Science+Business Media, 2007.

\bibitem{Vinberg} E.B.~Vinberg, \emph{A Course in Algebra}.\hskip 1em plus 0.5em minus 0.4em\relax Moscow: Factorial Press, 2001.

\bibitem{Blekherman} G.~Blekherman, ``Non-Negative Polynomials and Sums of Squares,''  arXiv:1010.3465v2 [math.AG] 8 Feb 2012

\bibitem{PolynomialRoots} C.P.~Hughes and A.~Nikeghbali ``The zeros of random polynomials cluster uniformly near the unit circle,''  arXiv:math/0406376v3  preprint.

\bibitem{KoivunenBelloni} F.~Belloni, A.~Richter, V.~Koivunen ``DoA estimation via manifold separation for arbitrary arrays,'' \emph{IEEE Trans. Signal Process.}, vol.~55, no.~10, pp.~4800--4810, Oct. 2007

\bibitem{Wilk} J.H.~Wilkinson {\it Rounding Errors in Algebraic Processes}. Englewood Cliffs, New Jersey: Prentice Hall, 1963.

\bibitem{BorderBases} L.~Robbiano and M.~Kreuzer \emph{Computational Commutative Algebra 2}.\hskip 1em plus 0.5em minus 0.4em\relax Heidelberg: Springer Verlag, 2005.

\bibitem{Pan} V.Y.~Pan ``Numerical Computation of a Polynomial GCD and Extensions,'' RR-2969, INRIA. 1996. <inria-00073729>

\bibitem{ArnauBarbarYang} M.~Arnaudon, F.~Barbaresco and L.~Yang ``Medians and means in Riemannian geometry: existence, uniqueness, and computation,'' hal-00640626 version 1 14 Nov 2011

\bibitem{ManoDemmel} D.~Manocha and J.~Demmel ``Algorithms for Intersecting Parametric and Algebraic Curves I: Simple Intersections,'' \emph{ACM Trans. on Graphics} vol.~13 no.~1, pp.~73--100, Jan.~1994

\bibitem{Griffiths} H.B.~Griffiths ``Cayley's version of the resultant of two polynomials,'' \emph{The American Mathematical Monthly}, vol. 88, no.5 pp.328--338 May, 1981

\bibitem{Tibshirani} T.~Hastie, R.~Tibshirani, J.~Friedman \emph{The Elements of Statistical Learning} New York: Springer Series in Statistics, 2008.

\bibitem{Zeng} Z.~Zeng ``The approximate GCD of inexact polynomials II: a multivariate algorithm,'' in \emph{Proc. ISSAC '04, ACM Press}, pp.~367--374, 2004.

\bibitem{zeng1} Z.~Zeng ``The numerical greatest common divisor of univariate polynomials'' in \emph{Contemporary Mathematics} vol.~556, pp.~187--214, 2011
\end{thebibliography}

\end{document}